\newenvironment{proof}[1][Proof]{\noindent\textbf{#1.} }{\ \rule{0.5em}{0.5em}}
\newcolumntype{L}[1]{>{\raggedright\let\newline\\arraybackslash\hspace{0pt}}m{#1}}
\newcolumntype{C}[1]{>{\centering\let\newline\\arraybackslash\hspace{0pt}}m{#1}}
\newcolumntype{R}[1]{>{\raggedleft\let\newline\\arraybackslash\hspace{0pt}}m{#1}}
\begin{document}

\begin{titlepage}
\title{When to Invest in Security? Empirical Evidence and a Game-Theoretic Approach for Time-Based Security}
\author{Sadegh Farhang\thanks{College of Information Sciences and Technology, The Pennsylvania State University. Email: farhang@ist.psu.edu} \and Jens Grossklags\thanks{Chair for Cyber Trust, Department of Informatics, Technical University of Munich. Email: jens.grossklags@in.tum.de}}
\date{May 2017}
\maketitle
\begin{abstract}
\noindent \textit{Games of timing} aim to determine the optimal defense against a strategic attacker who has the technical capability to breach a system in a stealthy fashion. Key questions arising are \textit{when} the attack takes place, and \textit{when} a defensive move should be initiated to reset the system resource to a known safe state.

In our work, we study a more complex scenario called Time-Based Security in which we combine three main notions: \textit{protection} time, 
\textit{detection} time,
and \textit{reaction} time. 
Protection time represents the amount of time the attacker needs to execute the attack successfully. In other words, protection time represents the inherent resilience of the system against an attack. Detection time is the required time for the defender to detect that the system is compromised. Reaction time is the required time for the defender to reset the defense mechanisms in order to recreate a safe system state. 

In the first part of the paper, we study the VERIS Community Database (VCDB) and screen other data sources to provide insights into the actual timing of security incidents and responses. While we are able to derive distributions for some of the factors regarding the timing of security breaches, we assess the state-of-the-art regarding the collection of timing-related data as insufficient.

In the second part of the paper, we propose a two-player game which captures the outlined Time-Based Security scenario in which both players move according to a periodic strategy. We carefully develop the resulting payoff functions, and provide theorems and numerical results to help the defender to calculate the best time to reset the defense mechanism by considering protection time, detection time, and reaction time. 

\bigskip
\end{abstract}
\setcounter{page}{0}
\thispagestyle{empty}
\end{titlepage}
\pagebreak \newpage

\doublespacing







\section{Introduction}
\label{sec:Intro}
On Monday early morning, February 17, 2014, an Ethiopian Airlines co-pilot informed ground control that he had highjacked flight ET-702 from Addis Ababa to Rome and was planning to fly the plane over Swiss air space towards Geneva. The plane eventually landed in Geneva at 6:02am local time. Despite the uncertainty about the motives of the highjacker, and heightened worries about terrorism, no escort could be provided by the Swiss Air Force since they do not operate before 8am on weekdays. They also do not operate during lunch breaks, or on weekends \cite{Agence14}.

We are using this opening real-world example to illustrate different strategic components of a security game unfolding over time. These games of timing help us to understand the defense against a motivated attacker who has the technical capability to breach a system in a \textit{stealthy} fashion. A first key question that then arises is \textit{when} the attack takes place, and \textit{when} a defensive move should be initiated to reset the system resource to a known safe state. Such scenarios have recently been a new focus of study, for example, with the FlipIt game \cite{dijk2013flipit}.

However, the Swiss Air Force example demonstrates that security situations typically involve more complexity which we aim to capture with a model of \textit{Time-Based Security}; a concept which has been informally introduced by Schwartau in 1999 \cite{Schwartau99}. In this model, we combine three main notions: \textit{protection} time ($\mathbf{p}$), \textit{detection}\footnote{We will use \textit{detection} time and \textit{discovery} time interchangeably throughout the paper.} time ($\mathbf{d}$), and \textit{reaction} time ($\mathbf{r}$). Protection time represents the amount of time the attacker needs to execute her attack successfully. In other words, protection time represents the inherent resilience of the system against an attack. Detection time is the required time for the defender to detect that his system has been stealthily compromised. This is an important facet of security decision-making as evidenced by data indicating that organizations on average fail to detect attacks for over 225 days \cite{Nadella15}. Finally, reaction time is the required time for the defender to reset his defense mechanisms in order to recreate a safe system state. 

In the example, the pilot stealthily took ownership of a plane at a particular day and time when he likely had many previous opportunities during his employment with Ethiopian Airlines. He then proceeded to direct the plane to his target destination. The total time to take possession of the plane and to reach the target destination is called the protection time in our model. The pilot informed ground control about the highjacking, and thereby significantly shortened the detection time of the defenders. It also changed a stealthy attack to a known attack. Finally, reaction time was excessive due to the non-responsiveness of the Swiss Air Force.

In our work, we first study the VERIS\footnote{VERIS is an acronym for the Vocabulary for Event Recording and Incident Sharing.} Community Database (VCDB)~\cite{vcdb} to shed light on the question of the actual timing of security incidents and responses. Based on our VCDB analysis, we provide the distribution of the detection time for malware activities and hacking incidents. We also provide a selection of heuristics about the protection time and the reaction time based on our VCDB analysis. In addition, we screen further data sources to shed light at the timing of security incidents. Furthermore, motivated by our empirical analysis, we propose a game-theoretic model for the concept of Time-Based Security. Based on our model specifications, we carefully derive the resulting payoff functions, and provide numerical results to help the defender to calculate the best time to reset his defense mechanism and Nash Equilibrium by considering protection time, detection time, and reaction time. 
We anticipate the further development of Time-Based Security to positively impact the study of security games of timing, as well as security practice.

\textbf{Roadmap:} In Section~\ref{sec:Related}, we discuss related work on security games of timing. In Section~\ref{sec:Data}, we provide empirical evidence on the impact of timing for security decision-making by analyzing data from the VERIS Community Database (VCDB) and other data sources. In Section~\ref{sec:SysMod}, we develop our model of time-based security, followed by analytic results in Section~\ref{sec:Ana}. In Section~\ref{sec:Num}, we provide numerical results. We conclude in Section~\ref{sec:Conclu}.

\section{Related Work}
\label{sec:Related}
Game theory has been used extensively for studying different aspects of security and privacy~\cite{Manshaei13}. These analyses include, but are not limited to, interdependent security~\cite{grossklags2008secure,Johnson11,Laszka_survey} and privacy~\cite{pu2014economic}, uncertainty about the type of the connected clients to a server~\cite{farhang2014dynamic,liu2006bayesian}, timing of security incidents~\cite{dijk2013flipit}, and many privacy issues such as location privacy~\cite{farhang2015phy,shokri2012protecting}.
One of the critical aspects of securing resources is the \textit{timing} of appropriate actions to successfully thwart attacks. The time-related aspects of security choices have been studied from both theoretical~\cite{Blackwell49,radzik1996results} and empirical perspectives~\cite{grossklags2014task,reitter2013risk,nochenson2013behavioral}. 
However, to the best of our knowledge, there is no extant research specifically investigating the timing of real-world security incidents and its connection to the games of timing literature.


The optimal timing of security decisions became a lively research topic with the development of the FlipIt game~\cite{bowers2012defending,dijk2013flipit}. 
In FlipIt, two players compete for control of a critical resource to be gaining continuous benefits. Players take ownership of the resource by making costly moves (``flips''). However, these moves are under incomplete information about the current state of possession of the contested resource. In the original FlipIt papers, equilibria and dominant strategies for basic cases of interaction are studied~\cite{bowers2012defending,dijk2013flipit}. Further, extended versions of FlipIt for periodic strategies have been studied by considering the impact of an audit move to check the state of the resource~\cite{pham2012compromised}.

Laszka et al. study FlipIt with non-covert defender moves. They consider both targeting and non-targeting attackers with non-instantaneous moves~\cite{laszka2013mitigation,laszka2013mitigating}. To extend the application of the FlipIt game for multiple contested resources, FlipThem is proposed~\cite{laszka2014flipthem}. FlipThem is also extended to consider the case where the attacker attempts to compromise enough resources to reach a threshold \cite{leslie2015threshold}. Likewise, Pal et al. study a game with multiple defenders~\cite{pal_security}. Farhang and Grossklags~\cite{farhang2016flip} propose a game-theoretic model in the tradition of FlipIt called \textit{FlipLeakage} where a variable degree of information leakage exists that depends on the current quality of defense. In addition, FlipIt has been studied with resource constraints placed on both players \cite{zhang_stealthy}. This line of work has been enriched by a model for dynamic environments with adversaries discovering vulnerabilities according to a given vulnerability discovery process and vulnerabilities having an associated survival lifetime \cite{johnson2015games}. Axelrod and Iliev~\cite{axelrod2014timing} study the timing of cyber-conflict from a different perspective. They proposed a model for the question of when the resource should be used by the attacker to exploit an unknown vulnerability. Note that the focus of their proposed model is on the attacker rather than the defender.

Other researcher teams have studied games with multiple layers in which in addition to external adversaries the actions of insiders (who may trade information to the attacker for a profit) need to be considered \cite{feng_stealthy,hu2015dynamic}. Taking a different perspective, a discrete-time model with multiple, ordered states in which attackers may compromise a server through cumulative acquisition of knowledge (compared to one-shot takeovers) has been proposed \cite{wellman2014empirical}.
Finally, the FlipIt framework has been extended to the context of signaling games to calculate sender's (one of the players in the signaling game) prior beliefs about the receiver's types (i.e., the second player in the signaling game which has two potential types) \cite{pawlick2015flip}.

From an empirical point of view, Liu et al.~\cite{liu2015cloudy} use data from the VERIS Community Database (VCDB) to predict cybersecurity incidents based on externally observable properties of an organization's network. Further, Sarabi et al.~\cite{sarabi2016risky} try to assess a company's risk from security incidents and the distribution of security incidents by using an organization's business details. They also use VCDB data for their analysis. Edwards et al.~\cite{edwards2016hype} investigate trends in data breaches by using Bayesian generalized linear models. In their analysis, they use data from the Privacy Rights Clearinghouse~\cite{PrivacyRights}. To the best of our knowledge, none of these studies focus on a detailed investigation of timing-related aspects.

Our theoretical work differs from the previous games of timing literature. We take into account the Time-Based Security approach in our model by integrating the notions of protection time, detection time, and reaction time to propose a more realistic game-theoretic framework. 
As a result, our work aims to overcome several important simplifications in the previous literature which would limit their applicability to practical defense scenarios. Moreover, in this paper, we focus on the values of \textit{protection time}, \textit{discovery time}, and \textit{reaction time} in different publicly available datasets. To the best of our knowledge, this is the first work taking this angle during the exploration of datasets and based on the analysis to propose a theoretical framework. 
\section{Security Incident Data about Timing}
\label{sec:Data}
In this section, we discuss empirical data sources which shed light on the question of the actual timing of security incidents and responses.

In the TBS model, we have three main notions: \textit{protection} time ($\mathbf{p}$), \textit{detection} time ($\mathbf{d}$), and \textit{reaction} time ($\mathbf{r}$). Our focus is to determine actual values of these parameters in practice. $\mathbf{p}$ represents the amount of time the attacker needs to execute the attack successfully. In other words, $\mathbf{p}$ is the amount of time that the defender can protect the system against an attack (i.e., the system's inherent resilience). $\mathbf{d}$ is the elapsed time in order for the defender to recognize that the system is compromised. Finally, we can understand the reaction time, $\mathbf{r}$, as the required time for the defender to reset/update the defense mechanism in order to create a safe system state. 

We do not expect that available field data will exactly match our definitions, but we anticipate that existing data sources provide some indication of the magnitude of these parameters.

\subsection{Dataset}
\label{sec:dataset}

A particularly relevant industry report for our work is Verizon's annual Data Breach Investigations Report (DBIR)~\cite{DBIR}. The report from 2016 draws on more than 100,000 incidents (3,141 of them are confirmed data breaches) from different sources including the VERIS Community Database (VCBD)~\cite{vcdb}. In particular, the report shows the percent of breaches for which the time of compromise and the exfiltration time is known,
see Figure~\ref{fig:verizon}. Time to compromise is defined as the time from the beginning of an attack to the first point at which a security attribute of an information asset was compromised. Exfiltration refers to the time from initial compromise to the time when valuable data was taken away from the victim (i.e., the first point in time at which non-public data was taken from the victim environment). Note that in Figure~\ref{fig:verizon}, $n$ shows the number of incidents with the available corresponding timing aspect. 

Unfortunately, Figure~\ref{fig:verizon} which stems from the DBIR report is not complemented with detailed commentary. For example, it is not clear what the sources of incidents are. Moreover, the figure does not provide the exact distribution of compromise time and exfiltration time. However, the given data suggests that protection time is very short, which is highly unfavorable from a defender's perspective. 


In the following, we take a closer look at the available data in the VCDB~\cite{vcdb}, which likely constitutes a major part of the data used for Figure~\ref{fig:verizon}, to provide a more realistic perspective about the timeline of security breaches and their availability. 

\begin{figure}
	\centering
	\includegraphics[scale=0.7]{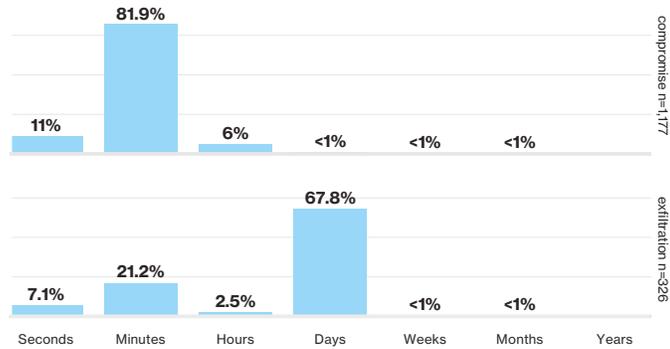}
	\caption{Time to compromise and exfiltration from DBIR report~\cite{DBIR}}
	\label{fig:verizon}
\end{figure}

The VCDB is currently composed of 5856 reports of publicly disclosed data breaches. This dataset includes incidents that occurred up to and including 2016. The structure of entries in the VCDB is based on the Vocabulary for Event Recording and Incident Sharing (VERIS)~\cite{VERIS} which also provides a description on how to report to the VCDB. Each entry of the VCDB includes the following: incident timeline, the victim organization, the actor and motive, the type of incident, how an incident occurred, the impact of an incident, links to news reports or blogs documenting the incident. Note that some of these fields are not mandatory, because a victim organization may not have all the information or may not want to disclose all the details of an incident. In the following, we only focus on the fields that are related to our study, which are \textit{action}, \textit{timeline}, \textit{impact}.

The first field that we take into account is ``action'' that represents information about the type of attack. In VCDB, there are seven primary categories: ``Malware'', ``Hacking'', Social'', ``Misuse'', ``Physical'', ``Error'', and ``Environmental''. Each category has additional fields to provide more information. For example, ``Hacking'' can be the result of SQL injection or a brute force attack. Note that based on the VERIS description, it is possible that some incidents are associated with more than one category. 

In our analysis, we only focus on the incidents resulting from ``Malware'' and/or ``Hacking'', since we are primarily interested in cybersecurity incidents caused by a malicious entity. In VCDB, there are $439$ entries with the ``Malware'' tag and $1655$ ``Hacking'' incidents. Taking into account that many of these incidents have more than one category, there are 1795 distinct incidents with ``Malware'' and/or ``Hacking'' labels, which we consider for further analysis. 

The next field we take into account is ``timeline'' which provides information about the timing of security incidents. The timeline field has five sub-fields: ``incident date'', ``time to compromise'', ``time to exfiltration'', ``time to discovery'', and ``time to containment''. The only mandatory part of ``timeline'' is to provide the year of an incident, while other parts are optional. Note that there may be different approaches among organizations to measure incident dates, and VERIS ``suggests the point of initial compromise as the most appropriate option to as the primary date for time-based analysis and trending of incidents"\''~\cite{VERIS}. 

Furthermore, ``incident date'' represents the first point at which a security attribute of an information asset was compromised. However, it is not obvious whether ``time to compromise'' has a distinct meaning in the VCDB from \textit{incident date}. For the entries with non-empty \textit{time to compromise}, the value of \textit{time to compromise} is equal to \textit{incident date}. ``Time to exfiltration,'' as previously stated, represents the initial compromise to data exfiltration. It is only available for data compromise incidents. Further, initial compromise to incident discovery is represented by the ``time to discovery'' field and the field ``time to containment'' shows the initial compromise to containment or restoration. 

As we mentioned earlier, we only consider those $1795$ entries resulting from ``Malware'' activities and/or ``Hacking'' incidents. $473$ of them have at least one additional non-empty field in addition to the mandatory \textit{incident date} field. As mentioned previously, every entry with non-empty \textit{time to compromise} field has the same value to its corresponding \textit{incident date}. Therefore, the \textit{time to compromise} field does not provide more information about the attack timing. Note that the timeline unit for other fields, i.e., \textit{time to exfiltration}, \textit{time to discovery}, and \textit{time to containment}, are ``NA'', ``Seconds'', ``Minutes'', ``Hours'', ``Days'', ``Weeks'', ``Months'', ``Years'', ``Never'', and ``Unknown''. In the following subsections, we investigate the values of timeline sub-fields to observe how the distribution of our desired parameters, i.e., $\mathbf{p}$, $\mathbf{d}$, and $\mathbf{r}$, are in practice. 

\subsection{Discovery Time}
\label{sub:discover}

With respect to the discovery time, there are $325$ entries with non-empty \textit{time to discovery} and without ``Unknown'' or ``NA'' field value. Some of these $325$ entries just provide the unit of the discovery time, e.g., ``Hours'', without specifying the exact value. We also exclude such entries from our analysis. In total, there are $150$ entries with exact values for the discovery time. In these $150$ entries, the average value of discovery time is equal to $198.2539$ days. The maximum value of discovery time is 6 years and the minimum value is 10 hours. Figure~\ref{fig:DiscoveryTime} represents the distribution of the discovery time in the VCDB given the stated limitations. 

\begin{figure}
	\centering
	\begin{subfigure} [Discovery Time]{ %
			\includegraphics[scale=0.35]{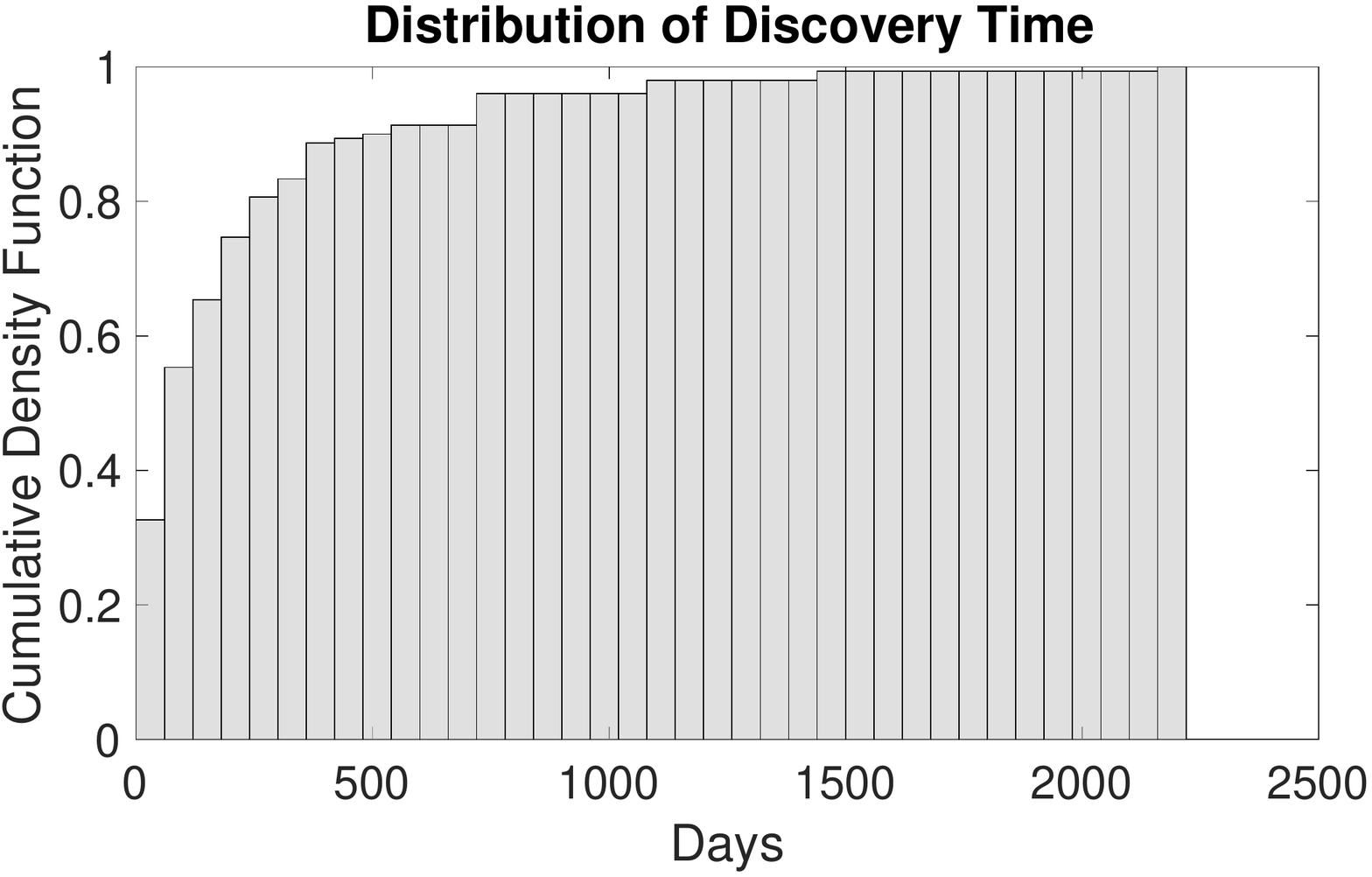}
			\label{fig:a}}
	\end{subfigure}
	\begin{subfigure} [Discovery Time]{ %
			\includegraphics[scale=0.35]{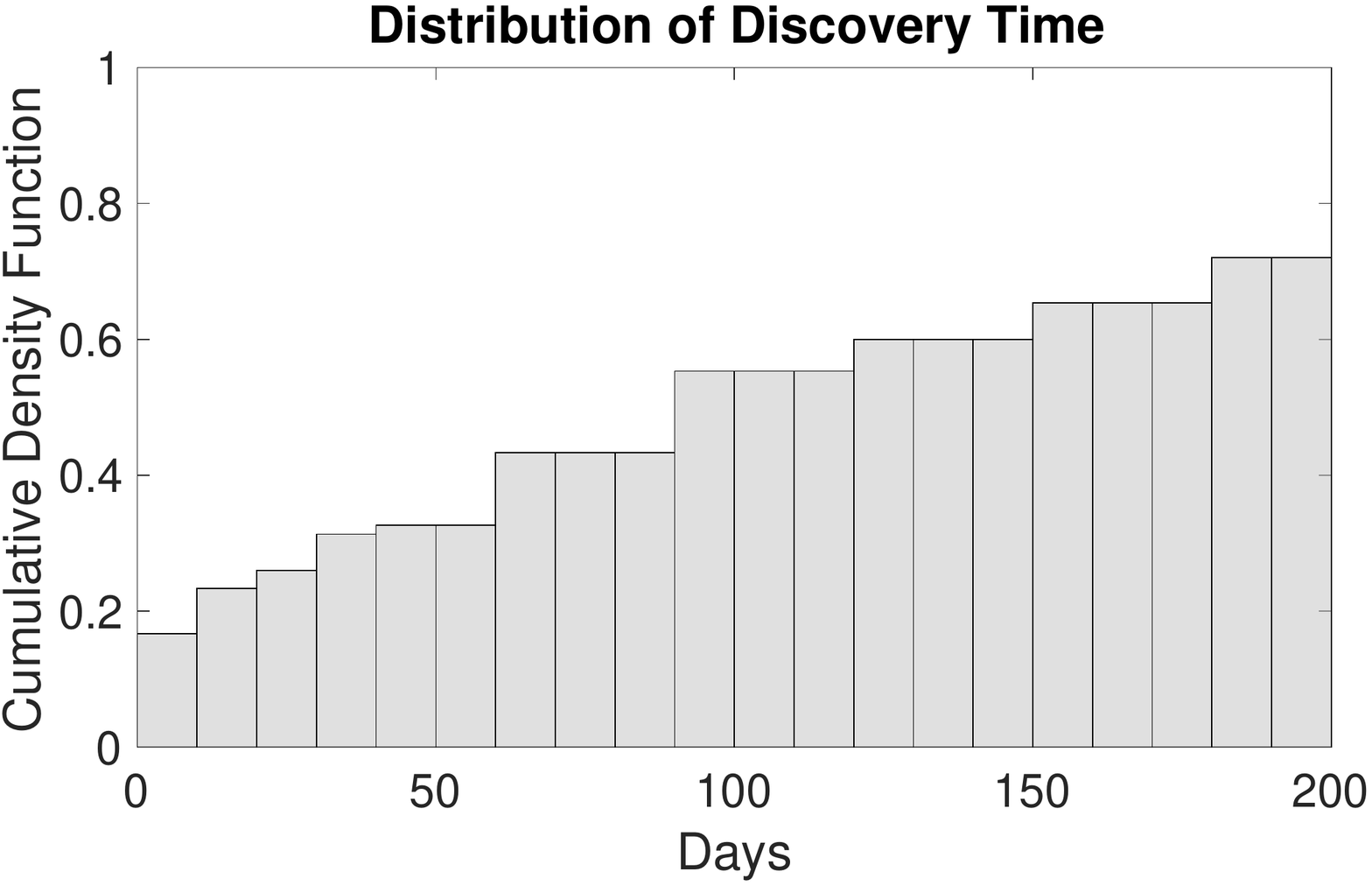}
			\label{fig:b}}
	\end{subfigure}
	\caption{Distribution of discovery time} 
	\label{fig:DiscoveryTime}
\end{figure}

In Figure~\ref{fig:a}, the length of each bar is 60 days. The vertical axis represents the cumulative density of discovery time. As an example, Figure~\ref{fig:a} shows that the discovery time for $32.67\%$ of all attacks is within $60$ days. 
In Figure~\ref{fig:b}, we take a closer look at the distribution of the discovery time below the average time of the discovery, i.e., $198.2539$ days. At this threshold, about $72\%$ of all attacks in the dataset are detected. In this figure, the length of each bar is equal to $10$ days. For example, we observe that within $20$ days, defenders detect about $23.33\%$ of all attacks. Despite the limitations, the distribution of the discovery time gives us an opportunity to calculate the probability of attack discovery over time.  

For 150 entries, we also have data stating discovery times. Among those, 17 provide also figures for the impact of the attack (see Table~\ref{tab:impact}). As we see in Table~\ref{tab:impact}, losses associated with legal and regulatory causes have the most impact compared to other categories of loss.\footnote{Note that one entry (i.e., with incident time \textit{10/2014}) has a discovery time of 6 years. According to the VERIS definition, incident time should reflect the point in time when compromise first occurred. But, it seems that for this entry, incident time does not reflect this value accurately. Our interpretation is that for this event the incident time likely represents the date at which the incident was discovered.}


\begin{table}
	\centering
	\begin{adjustbox}{width=1\textwidth}
		\small
\begin{tabular}{|c|c|c|c|c|c|c|c|c|}\hline
Incident Time & Discovery Time & Employee & Asset and Fraud  & Business Disruption & Operating Costs & Legal and Regulatory & Response and Recovery & Overall Amount \\ \hline 

5/2005 & 18 Months & Over 100,000 & 68,000,000 & - & - & 9,700,000   & 256,000,000 & - \\ \hline

2007   & 2 Months & Large & 137,000 & - & - & -  & - & 100,000 \\ \hline

12/2007 & 10 Months & 1001 to 10000 & -  & - & - & 140,000,000  &  - & - \\ \hline

7/2011 & 10 Days & 1001 to 10000 & -  & - & - &  508,000 & - & - \\ \hline
 
9/2011  & 2 Years & 11 to 100 & - & - & - & 3,000,000  & - & - \\ \hline

2/2012  & 5 Months & 25001 to 50000 & - & - & - & 2,725,000  & - & - \\ \hline

2012  & 1 Year & 10001 to 25000 & - & - & - &  - & - & 72,000,000 \\ \hline

4/2013  & 2 Weeks & ``Small'' & - & - & - &  325,000 & - & 325,000 \\ \hline

7/2013 & 15 Days & 10001 to 25000 & - &  2,100,000 & - & -  &  1,600,000 & 3,700,000 (min) \\ \hline

7/2013 & 10 Months & Unknown & - &-  & - & -  &  - & 1,000,000 \\ \hline

11/2013 & 1 Months & Over 100,000 &  - & 148,000,000 & - & 77,000,000  & 18,000,000 & - \\ \hline  

 6/2014  & 1 Month & Over 100000 & - & - & 250,000,000 & - &  - & - \\ \hline    

10/2014   & 6 Years & 1001 to 10000 & - & - & - &  - &  177,000 &  \\ \hline 

4/2015   & 1 Years & 1001 to 10000 & -  & - & - & -  & - &  133,300,000\\ \hline 

7/2015   & 3 Weeks & Unknown & - & - & - & -  & - &  170,000 \\ \hline 
   
8/2015   & 19 Months & 10001 to 25000 & - & - & - & 17,300,000  & - & - \\ \hline
   
8/2015   & 2 Years & 1001 to 10000 & - & - & - & 2,600,000  & - & 2,600,000 \\ \hline

\end{tabular}
\end{adjustbox}
\caption{Impact of security incidents for entries with discovery time. In VCDB some of the entries provide the currency unit, while others do not. All of the above incidents are in US $\$$, except for the incident with incident time $7/2015$ which occurred in China and does not specify the currency unit.}
\label{tab:impact}
\end{table}

\subsection{Protection Time and Reaction Time}
\label{sub:Prot&Reac}

When considering ``Hacking'' and/or ``Malware'' entries for exfiltration time (and excluding data marked with empty, ``NA'', and ``Unknown''), we are left with 44 entries; however, only 5 provide concrete values (see Table~\ref{tab:exfilTime}).

Exfiltration time can be construed as the \textit{protection time} for data compromise events. While the VCDB does not provide much data about exfiltration times, the data allows for a very preliminary first approximation, i.e., that exfiltration (protection) time is lower than discovery time. 

For containment time, we find 258 entries with non-empty fields. After excluding ``Unknown'' and ``NA'' entries, we are left with 175 incidents of which 59 have specific values. One of these entries has a containment time value of 10 years. This incident took place in the year 2000 with a discovery time of 4 years. The summary of the breach in the VCDB is ``Nortel was the victim of a years-long network security breach that allowed hackers to extract its trade secrets, according to a veteran of the bankrupt Canadian telco systems biz.'' We exclude this entry from our data for calculating the distribution of the containment time.  

\begin{table}
	\centering
	\begin{tabular}{|c|c|c|c|} \hline
		Incident Time & Discovery Time & Exfiltration Time & Containment Time \\ \hline
		4/16/2011 & Days & 2 Days & Days \\ \hline
		7/18/2011 & 10 Days & 7 Days & - \\ \hline
		7/24/2013 & 15 Days & 2 Days & - \\ \hline
		11/15/2013 & 1 Months & 2 Weeks & - \\ \hline		
		4/15/2015  & 1 Year & 2 Months & 15 Days \\ \hline
	\end{tabular}
	\caption{Exact value of exfiltration time}
	\label{tab:exfilTime}
\end{table}

Figure~\ref{fig:ContTime} shows the distribution of these 58 entries. In Figure~\ref{fig:c} the length of each bar is 2 days. As we see in this figure, the containment time is no more than 90 days in these 58 entries. The average value of containment time is equal to $10.4504$ days. In Figure~\ref{fig:d}, we take a closer look at the containment times which are lower than the average value. Here we can observe that the containment time is within 1 day for $22.41\%$ of these incidents. Note that a short containment time also implies a short discovery time (since containment time indicates initial compromise to containment/restoration).

\begin{figure}
	\centering
	\begin{subfigure} [Containment Time]{ %
	\includegraphics[scale=0.35]{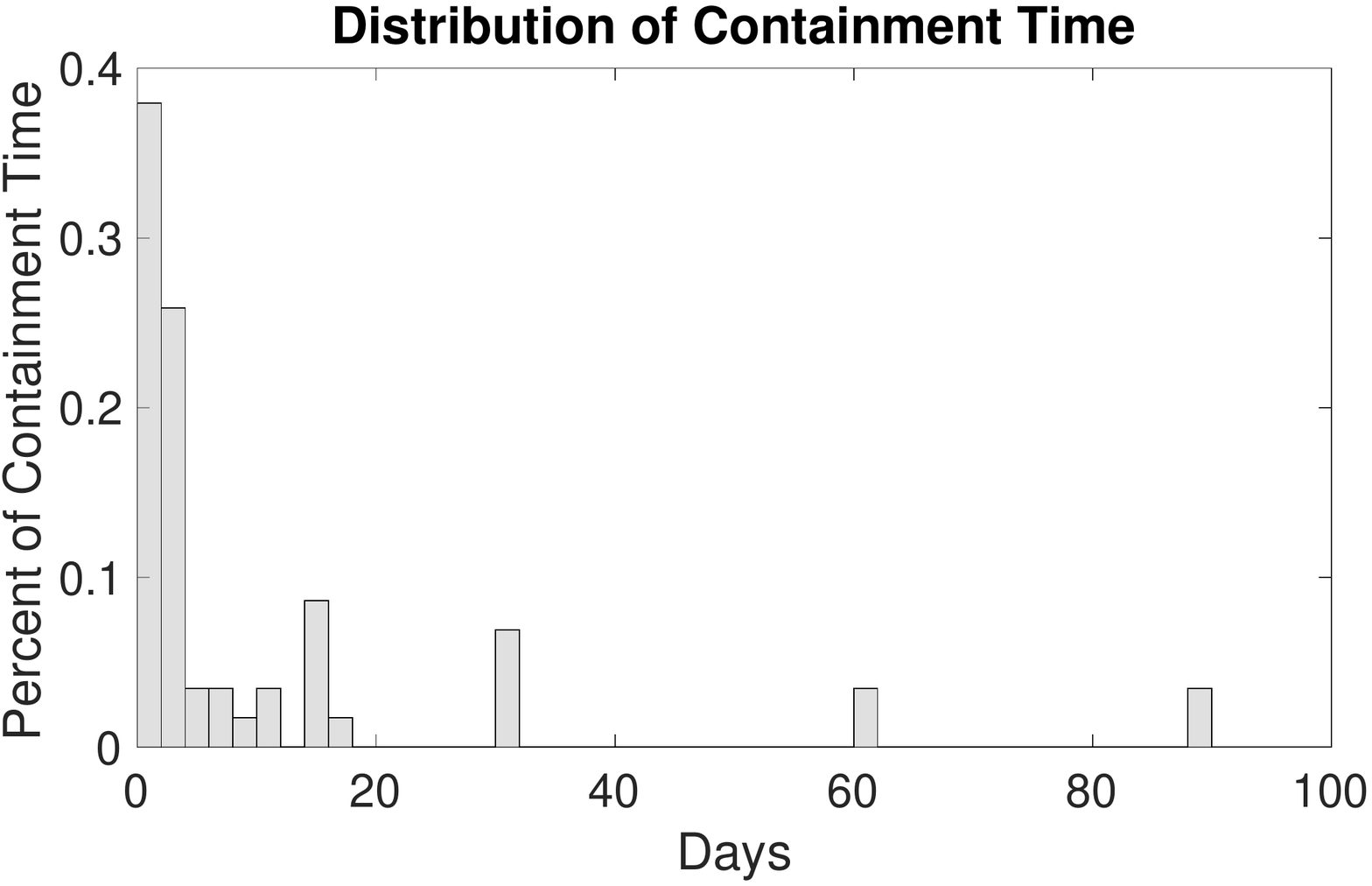}
	\label{fig:c}}
    \end{subfigure}
    \begin{subfigure} [Containment Time]{ %
   			\includegraphics[scale=0.4]{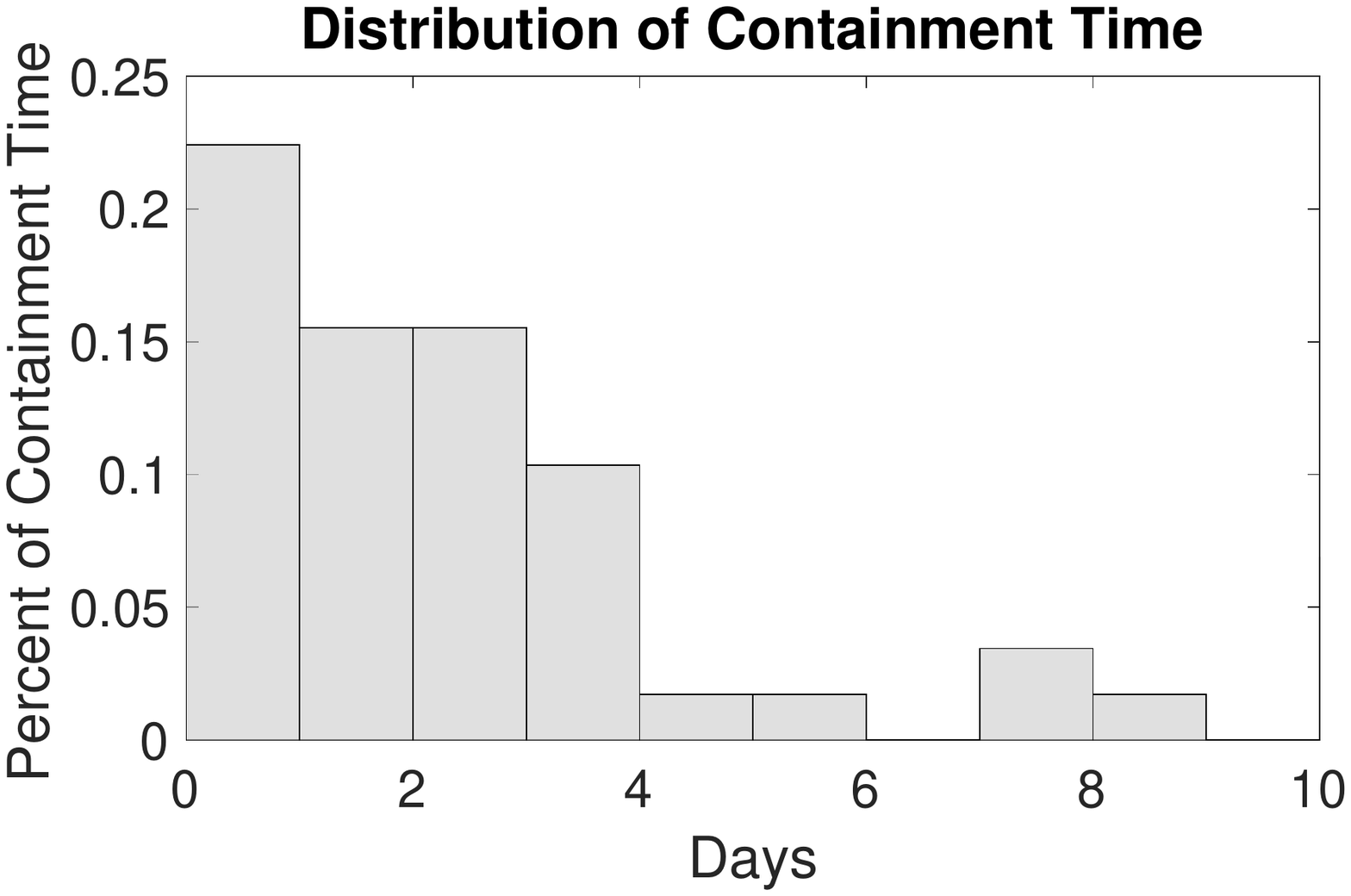} 			
   			\label{fig:d}}
    	\end{subfigure}
	\caption{Distribution of Containment Time}
	\label{fig:ContTime}
\end{figure}

\begin{table}
	\centering
	\begin{tabular}{|c|c|c|c|}\hline
		Incident Time & Discovery Time & Containment Time & Exfiltration Time \\ \hline
		5/4/2001 & Minutes & 15 Days & - \\ \hline 
		
		2004 & 2 Months & 1 Month & - \\ \hline
		
		5/2005 & 18 Months & 2 Months & - \\ \hline 
		
		4/17/2011 & Days & 2 Days & - \\ \hline
		
		9/2012 & 6 Months & 3 Months & - \\ \hline
		
		2/15/2013 & 4 Days & 1 Day & - \\ \hline 
		4/13/2013 & 2 Weeks & 2 Days & - \\ \hline
		6/24/2013 & 3 Days & 3 Days & - \\ \hline
		7/2013 & 6 Months & 8 Days & - \\ \hline
		8/26/2013 & Minutes & 1 Hour & - \\ \hline
		8/29/2013 & Hours & 1 Day & - \\ \hline
		9/3/2013 & 2 Days & 1 Day & - \\ \hline
		9/8/2013 & Minutes & 8 Hours & - \\ \hline
		10/5/2013 & Minutes & 9 Hours & - \\ \hline
		11/2013 & Seconds & 2 Months & - \\ \hline
		12/8/2013 & Seconds & 3 Hours & - \\ \hline
		12/28/2013 & 6 Months & 2 Weeks & - \\ \hline
		
		2/18/2014 & Seconds & 2 Days & - \\ \hline
		3/23/2014 & Seconds & 6 Hours & - \\ \hline
		3/24/2014 & Seconds & 2 Hours & - \\ \hline 
		6/16/2014 & 6 Weeks & 3 Months  & - \\ \hline
		3/27/2014 & Seconds & 90 Minutes & - \\ \hline
		7/3/2014 & Minutes & 1 Day & - \\ \hline
		8/1/2014 & Minutes & 2 Hours & - \\ \hline
		9/1/2014 & 2 Years & 1 Month & - \\ \hline
		10/6/2014 & 6 Years & 2 Days & - \\ \hline

		2/14/2015 & 3 Months & 3 Days & - \\ \hline
		4/15/2015 & 1 Year & 15 Days & 2 Months  \\ \hline
		5/22/2015 & 1 Month & 1 Day & - \\ \hline
		5/2015 & Minutes & 2 Weeks & - \\ \hline
		6/15/2015 & Minutes & 3 Days & - \\ \hline
		7/31/2015 & 3 Weeks & 2 Days & - \\ \hline
		8/2015 & 2 Years & 1 month & - \\ \hline
		11/24/2015 & 10 Days & 16 Days & - \\ \hline
	\end{tabular}
	\caption{Containment Time}
	\label{tab:ContTime}
\end{table}

Table~\ref{tab:ContTime} shows the values of containment time alongside with incident time, discovery time, and exfiltration time. According to the definition of the containment time in VERIS, we expected that the value of the containment time should be higher than the discovery time. (Containment time is equal to initial compromise to containment/restoration, while discovery time is the duration of initial compromise to incident discovery.) But, as we can observe in Table~\ref{tab:ContTime}, in many incidents, the value of containment time is lower than the value of discovery time. We believe that many reporters interpreted containment time as the time that a victim organization spent to recover its system from an incident. Therefore, the value of the containment time may actually more adequately reflect the notion of \textit{reaction time} in our model. 

It is worth mentioning that in the context of phishing, it is possible to calculate the average of the reaction time according to Moore and Clayton~\cite{moore2007examining}, who calculate the mean and median values of phishing sites' lifetime in hours. 

By studying the VCDB and only focusing on ``Hacking'' and ``Malware'' incidents, we are limited to a small set of values for  discovery time. Unfortunately, for the other two important factors, i.e., protection time and the reaction time, the VCDB does not provide information in an obvious fashion. Figure~\ref{fig:ContTime} may suggest that the reaction time is very fast, but there are two issues. First, the number of entries with containment time is quite small. Second, containment time is the sum of discovery time and reaction time and we do not have the discovery time for these entries. 

In order to seek additional insights into the values of protection time, discovery time, and reaction time, we also studied the Web Hacking Incidents Database (WHID)~\cite{WebHacking} and the dataset published by the Privacy Rights Clearinghouse~\cite{PrivacyRights}. None of these two databases provide information about these values. Further, Kuypers et al.~\cite{kuypers2016empirical} investigate the statistical characteristics of sixty thousand cybersecurity incidents for one large organization over the course of six years. In this dataset, each incident has a field showing the duration of man-hours that were required to investigate that incident. The authors also suggest that this dataset includes the time spent for remediation. We anticipate, if this dataset would be made publicly available, one would potentially calculate the value of reaction time as well as discovery time for this specific organization. Another report by Damballa~\cite{Damballa} demonstrated that the typical gap between malware release and detection/remediation using antivirus is 54 days. The study was comprised of over 200,000 malware samples scanned by a leading industry antivirus tool over six months. The study also revealed that almost half of the 200,000 malware samples were not detected on the day they were received, and 15\% of the samples remained undetected after 180 days.

In summary, while on the first glance the VCDB provides a significant amount of data for cybersecurity incidents, the actual details with respect to timing information are insufficient to draw robust conclusions. More special domain data sources, e.g., for phishing \cite{moore2007examining}, may exist, but we are unaware of any larger efforts to collect timing data. We consider this state-of-affairs a significant omission of cybersecurity-related data collection and want to encourage further work in this direction.

At the same time, the lack of empirical data emphasizes the importance of theoretical models to understand the timing aspects of strategic security scenarios. In what follows, we describe our model of Time-Based Security to advance this research field.
\section{System Model}
\label{sec:SysMod}
In this section, we propose our model for the Time-Based Security (TBS) approach following the tradition of game theory.
Our model is an infinite two-player game between a \textit{Defender} ($\mathbf{D}$) and an \textit{Attacker} ($\mathbf{A}$) competing with each other to control the defender's resource for a large portion of time, while incorporating the key characteristics derived from TBS.
Note that each player's action to change the ownership of the resource is costly. The attacker's cost to compromise the defender's system is represented by $c_{\mathbf{A}}$. Likewise, $c_{\mathbf{D}}$ denotes the defender's cost to reset the state of the system from compromised to safe. Further, we differentiate between the defender's move to check the state of the resource and the defender's move to reset the resource to a safe state. In what follows, we call the former the defender's \textit{check} and the latter one the defender's \textit{reset}.  We denote the defender's cost to discover whether its system has been compromised, i.e., check, as $c_\mathbf{k}$. 

In this paper, we focus our analysis on the case of a periodically acting attacker and a periodically acting defender. The defender's periodic resource checking, although stationary in nature, partially addresses the defender's uncertainty regarding resource management by creating predictable schedules. For example, it is easier for system administrators to deal with periodic comprehensive security risk evaluations. Further, Farhang and Grossklags~\cite{farhang2016flip} provide two data examples, Microsoft's security policy updates and Oracle's critical patch updates, to show that in practice, several major software vendor organizations update their security policies in a periodic manner.
We denote the periodicity of the attacker attempting to compromise the system, and the defender to check whether the system has been compromised with $t_{\mathbf{A}}$ and $t_{\mathbf{D}}$, respectively. 


While the defender checks the resource in a periodic manner, the defender's reset is conditioned on an attack's detection. The defender requires $\mathbf{d}$ amount of time to detect that the system is compromised after the attacker spends $\mathbf{p}$ amount of time to execute the attack successfully. Note that here we propose a pessimistic model for the defender. The defender cannot detect the attacks that are in progress. In other words, only if the defender starts the discovery process after the attacker completely compromised the defender's system, the attack will be detected. 
At the beginning of the game, the defender checks the state of the resource within interval $[0,t_{\mathbf{D}}]$, and the attacker moves within interval $[0,t_{\mathbf{A}}]$ (both with uniform distribution). Hence, each player does not know the exact time of the other player's move; even if they exactly know the values of the parameters forming the game. 

Moreover, we assume that the values of $\mathbf{p}$, $\mathbf{d}$, and $\mathbf{r}$ are constant for the sake of analysis. It is easy to see that from a defender's point of view a system with a small protection time, but large discovery time and large reaction time should be considered unfavorable. Note that in our data analysis in Section~\ref{sec:Data}, we provide a distribution for the \textit{discovery time}. Here, we do not consider the probabilistic nature of this parameter, but we believe our analysis is an important first step to move towards a comprehensive model for the \textit{time-based security} approach. Further, in practice, there is a possibility that after spending a certain amount of time, i.e., $\mathbf{d}$, the defender may not be able to detect an attack. We exclude this possibility from our current model and will consider it in future work. 

Note that an attacker's action during the defender's reaction time will not lead to a successful attack. In other words, the attacker's action in this time interval is ineffective. The reasoning behind this consideration is that during the reaction time, the defender's system is changed to a new safe state. As the attacker's action is based on the previous defender's state, it may not be compatible with the updated/reset defender's system. 

According to our model description and parameters, for the rational attacker we have $t_{\mathbf{A}} \geq \mathbf{p+d+r}$, since the defender's moves are conditioned on detection ($\mathbf{d}$ amount of time after the attack's success) and the attacker's attack will be successful $\mathbf{p}$ units of time after the attack. Further, if the defender moves right after the detection, the resource still belongs to the attacker for $\mathbf{r}$ units of time. Therefore, $t_{\mathbf{A}}$ should be higher than or equal to $\mathbf{p+d+r}$. Similarly, for the defender, we also have $t_{\mathbf{D}} \geq \mathbf{p+d+r}$. 

To calculate both players' average payoff functions, we need to derive the average time that each player controls the resource minus the average cost of each player's actions over time. In doing so, the general payoff functions are as follows:

\begin{equation}
u_{\mathbf{D}} \left( t_{\mathbf{D}} , t_{\mathbf{A}} \right) = \tau_{\mathbf{D}i} - \frac{c_{\mathbf{D}}}{\delta_{\mathbf{D}i}} - \frac{c_\mathbf{k}}{t_{\mathbf{D}}},
\label{eq:GeneralPayDef}
\end{equation}

\begin{equation}
u_{\mathbf{A}} \left( t_{\mathbf{D}} , t_{\mathbf{A}} \right) = \left(1 - \tau_{\mathbf{D}i} \right)- \frac{c_{\mathbf{A}}}{t_{\mathbf{A}}}. 
\label{eq:GeneralPayAtt}
\end{equation}

Where $\tau_{\mathbf{D}i}$ represents the average fraction of time that the defender controls the resource. It is obvious that the attacker controls the resource for the rest of the time, i.e., $1-\tau_{\mathbf{D}i}$. We use subscript $i$ to differentiate among different cases in our payoff calculations. The time between two consecutive resets by the defender is denoted by $\delta_{\mathbf{D}}$ and the defender's average cost rate over time is equal to $\left(c_{\mathbf{D}}/\delta_{\mathbf{D}i} + c_\mathbf{k}/t_{\mathbf{D}}\right)$. Likewise, the average cost rate for the attacker is equal to $c_{\mathbf{A}}/t_{\mathbf{A}}$. 

We identify \textbf{six} different cases which we discuss in the following. To calculate the payoff functions for these cases, we need to calculate $\tau_{\mathbf{D}}$ and $\delta_{\mathbf{D}}$. Table~\ref{table:hl_value} summarizes the notations, we have used in our model.

 \begin{table}[t]
 	\centering
 	\caption{Summary of notations}\label{tab:Notation}
 	{\rowcolors{2}{white}{lightgray!40}
 		\begin{tabular}{ll}
 			\hline
 			Variable & Definition \\
 			\hline
 			$\mathbf{p}$ & Protection time \\
 			$\mathbf{d}$ & Detection/discovery time \\
 			$\mathbf{r}$ & Reaction time \\
 			$c_{\mathbf{D}}$ & Defender's cost to reset the system's state \\
 			$c_\mathbf{k}$  & Defender's cost to check the state of the system \\
 			$c_{\mathbf{A}}$ & Attacker's cost to compromise the defender \\
 			$t_\mathbf{D}$ & Time between two consecutive checks by the defender \\
 			$t_{\mathbf{A}}$ & Time between two consecutive moves by the attacker \\
 			$\tau_{\mathbf{D}}$ & Average fraction of time that the defender controls the resource \\
 			$\delta_{\mathbf{D}}$ & Average time between the defender's two consecutive reset moves \\
 			$u_{\mathbf{D}}$ & Defender's utility \\
 			$u_{\mathbf{A}}$ & Attacker's utility \\
 			\hline
 		\end{tabular}
 	}
 	\label{table:hl_value}
 \end{table}
 
\textbf{Case 1}: $ t_{\mathbf{D}}  \leq  t_{\mathbf{A}} -\mathbf{p-d-r} $ 

The above condition also implies that $t_{\mathbf{D}} < t_{\mathbf{A}}$ and the defender's discovery move occurs at least once between the attacker's two consecutive moves. Consider a given attacker move interval $[t,t+t_{\mathbf{A}}]$. In this interval, the defender's discovery move can occur either during the protection time or after the protection time which results in two sub-cases.

\textbf{Case 1.1}: Let $x = \frac{\mathbf{p}}{t_{\mathbf{D}}}$. The probability that the defender's discovery move occurs during the protection time is equal to $x$. According to our game definition, the defender cannot detect this attack. However, the defender's next discovery move occurs before the attacker's next attack, since $ t_{\mathbf{D}} + \mathbf{p+d+r} \leq  t_{\mathbf{A}}  $. Further, both the defender's detection and effectiveness of the reset occur before the next move by the attacker. Therefore, in each attacker move interval, the defender only resets the resource once, i.e., $\delta_{\mathbf{D}11} = t_\mathbf{A}$.

Further, in order to calculate our payoff functions, we need to calculate the average fraction of time that each player controls the resource. 
Consider a given attacker move interval $[t,t+t_{\mathbf{A}}]$. The defender's discovery move occurs uniformly with probability $x$ during the protection time. This discovery move of the defender does not result in the detection and the attacker is the owner of the resource until the successful discovery and the corresponding defensive reset action. Therefore, the average time that the attacker controls the resource is equal to $T_{\mathbf{A}11} = t_{\mathbf{D}} + \mathbf{d+r} - \frac{\mathbf{p}}{2}$.\footnote{The defender's first discovery move occurs in interval $[t,t+\mathbf{p}]$ uniformly. The next one occurs in $[t+t_{\mathbf{D}},t+t_{\mathbf{D}}+\mathbf{p}]$ and leads to the discovery of the attack. The attacker is the owner of the resource after the protection time until the defender's reset move becomes effective, i.e., the defender's defensive move is effective in the interval $[t+t_{\mathbf{D}}+\mathbf{d+r},t+t_{\mathbf{D}}+\mathbf{p+d+r}]$. Due to the defender's uniform move in the protection time, the attacker is the owner of the resource for $t_{\mathbf{D}} + \mathbf{d+r} + \frac{\mathbf{p}}{2} - \mathbf{p}$ on average.} For the remainder of the time, the resource belongs to the defender, i.e., $T_{\mathbf{D}11} = t_\mathbf{A} - T_{\mathbf{A}11}$. Dividing these two values by $t_{\mathbf{A}}$ gives the \textit{average fraction of time} that each player controls the resource. Therefore, we have $\tau_{\mathbf{D}11} = \frac{T_{\mathbf{D}11}}{t_\mathbf{A}}$.

\textbf{Case 1.2}: The probability that the defender's discovery move occurs after the protection time is equal to $1-x$. Thus, the defender can detect and recover its system completely before the launch of the next attack. Hence, in each attacker's interval of two consecutive moves, the defender only resets the resource once, i.e., $\delta_{\mathbf{D}12} = t_\mathbf{A}$.

The attacker is the owner of the resource after the protection time until the discovery move results in detection and the defensive move results in complete system recovery (that is, reaction time has passed). The average time that the attacker controls the resource is $T_{\mathbf{A}12} = \frac{t_{\mathbf{D}}- \mathbf{p}}{2} + \mathbf{d+r}$.\footnote{The defender's discovery move occurs in the interval $[t+\mathbf{p},t+t_{\mathbf{D}}]$ uniformly. Given that the defender's discovery move is distributed uniformly at random, half of this time interval belongs to the attacker plus the detection and the reaction time.} 
The rest of the time, the resource belongs to the defender, i.e., $T_{\mathbf{D}12} = t_\mathbf{A} - T_{\mathbf{A}12}$. Dividing these two values by $t_{\mathbf{A}}$ gives the \textit{average fraction of time} that each player controls the resource. Therefore, we have $\tau_{\mathbf{D}12} = \frac{T_{\mathbf{D}12}}{t_\mathbf{A}}$.

For this case, we have considered two sub-cases. To combine these two sub-cases, we take the expected value with respect to the probability of each sub-case. Therefore, we have:

\begin{equation}
\delta_{\mathbf{D}1} = x \delta_{\mathbf{D}11} + \left(1-x\right)\delta_{\mathbf{D}12} = t_\mathbf{A}.
\label{eq:Case1tD}
\end{equation}

The above formula shows that the defender resets the resource only once when the time between two consecutive discovery moves is low enough, i.e., $t_{\mathbf{D}} \leq t_{\mathbf{A}} - \mathbf{p-d-r}$.

In a similar way, to calculate $\tau_{\mathbf{D}1}$ we have:

\begin{multline}
\tau_{\mathbf{D}1} = x \tau_{\mathbf{D}11} + \left(1-x\right)\tau_{\mathbf{D}12} = \frac{\mathbf{p}\left(t_{\mathbf{A}} - t_{\mathbf{D}} + \frac{\mathbf{p}}{2} - \mathbf{d-r} \right)}{t_{\mathbf{D}}t_{\mathbf{A}}} + \\
\frac{\left(t_{\mathbf{D}} - \mathbf{p}\right) \left( t_{\mathbf{A}} - \frac{t_{\mathbf{D}}}{2} + \frac{\mathbf{p}}{2} - \mathbf{d-r} \right)}{t_{\mathbf{D}}t_{\mathbf{A}}} = \frac{t_{\mathbf{A}} - \frac{t_{\mathbf{D}}}{2} - \mathbf{d-r}}{t_{\mathbf{A}}}.
\label{eq:CaseTD}
\end{multline}

It is interesting to observe that in this case, the average fraction of time that the defender possesses the resource, i.e., $\tau_{\mathbf{D}1}$, does not explicitly depend on the \textit{protection} time. However, this case's condition depends on the value of $\mathbf{p}$. The above formula shows that the lower the values of $\mathbf{d}$ and $\mathbf{r}$, the larger the amount of time that the defender controls the resource. In other words, when the defender checks the state of the resource fast enough, i.e., $t_{\mathbf{D}} \leq t_{\mathbf{A}} - \mathbf{p-d-r}$, the defender's utility is not affected by the protection time. It is rather affected by the discovery time and the reaction time. Influencing the latter factors should then be the focus of attention for a rational defender. 

By incorporating these two equations, i.e., $\delta_{\mathbf{D}1}$ (Equation~\ref{eq:Case1tD}) and $\tau_{\mathbf{D}1}$ (Equation~\ref{eq:CaseTD}), into Equations~\ref{eq:GeneralPayDef} and \ref{eq:GeneralPayAtt}, we can calculate both players' payoff functions. 

\textbf{Case 2}:  $t_{\mathbf{A}} - \mathbf{p-d-r} \leq t_{\mathbf{D}} \leq t_{\mathbf{A}} - \mathbf{d-r}$ 

Similar to the previous case, the defender's discovery move occurs either during the protection time or after the protection time. Thus, we consider two sub-cases as follows. 

\textbf{Case 2.1}: Let $x = \frac{\mathbf{p}}{t_{\mathbf{D}}}$. The probability that the defender's discovery move occurs during the protection time is equal to $x$. According to our game definition, the defender cannot detect an attack that is in progress. Now, consider a given attacker move interval $[t,t+t_{\mathbf{A}}]$. The defender's discovery move occurs uniformly with probability $x$ during the protection time interval, i.e., $[t,t+\mathbf{p}]$, which is not effective. The next defender's discovery move occurs in interval $[t+t_{\mathbf{D}},t+t_{\mathbf{D}}+\mathbf{p}]$. This discovery move results in the detection of the attack and initiates the defensive move. The defensive move is effective after the reaction time which is in interval $[t+t_{\mathbf{D}}+\mathbf{d+r},t+t_{\mathbf{D}}+\mathbf{p+d+r}]$. However, some of the defensive moves in this interval are effective before the attacker's next attack and some of them are after the next attacker's move, since $t_{\mathbf{A}} \leq t_{\mathbf{D}} + \mathbf{p+d+r}$. In other words, for the fraction of this interval, the attacker's next move is not effective since it occurs during either the reaction time or the detection time. Therefore, the defender's defensive move occurs once in $t_{\mathbf{A}}$ or $2t_{\mathbf{A}}$. 

The probability that the defender's defensive move occurs only once in each $t_{\mathbf{A}}$ is equal to $a_{21} = \frac{t_{\mathbf{A}} - t_{\mathbf{D}} - \mathbf{d-r}}{\mathbf{p}}$.\footnote{$a_{21}$ represents the fraction of the defender's move during the protection time that results in exactly one defensive move in each $t_{\mathbf{A}}$. The first discovery move is not effective, since it occurs during the protection time. The second discovery move results in detection and the defensive move. Those defensive moves that are in interval $[t+t_{\mathbf{D}}+\mathbf{d+r},t+t_{\mathbf{A}}]$ result in one defensive move in each $t_{\mathbf{A}}$ (the length of this interval is equal to $t_{\mathbf{A}} - t_{\mathbf{D}} - \mathbf{d-r}$).} The rest of the time, i.e., $a_{22} = 1 - a_{21}$, the defender's reset occurs once in each $2t_{\mathbf{A}}$. Therefore, on average, we have $\delta_{\mathbf{D}21} = a_{21}t_{\mathbf{A}} + a_{22}2t_{\mathbf{A}} $. 

To calculate the average fraction of time that each player controls the resource, we differentiate between the cases when the defender's defensive move is effective in $t_{\mathbf{A}}$ and $2t_{\mathbf{A}}$. When this time is equal to $t_{\mathbf{A}}$, consider a given attacker move interval $[t,t+t_{\mathbf{A}}]$.  Then, the attacker is the owner of the resource after the protection time until the defensive move becomes effective. The defensive move is effective in interval $[t+t_{\mathbf{D}}+\mathbf{d+r},t+ t_\mathbf{A}]$ uniformly. On average, the attacker is the owner of the resource for half of this time interval plus the time before the defense effectiveness except the protection time. Hence, we have $T_{\mathbf{A}21_1}= \frac{t_{\mathbf{A}}+t_{\mathbf{D}} + \mathbf{d+r}}{2} - \mathbf{p}$. And the defender is the owner of the resource for the rest of the time, i.e., $T_{\mathbf{D}12_1} = t_{\mathbf{A}} - T_{\mathbf{A}21_1} $. Dividing these two values by $t_{\mathbf{A}}$ gives the average fraction of time that each player controls the resource. Therefore, we have $\tau_{\mathbf{D}21_1} = \frac{T_{\mathbf{D}21_1}}{t_\mathbf{A}}$. When each defensive move occurs in $2t_{\mathbf{A}}$, consider an interval of two consecutive moves for the attacker $[t,t+2t_{\mathbf{A}}]$. In a similar way, we have $T_{\mathbf{A}12_2} = \frac{t_{\mathbf{A}} + t_{\mathbf{D}} + \mathbf{d+r-p}}{2}$,  $T_{\mathbf{D}12_2} = 2t_{\mathbf{A}} - T_{\mathbf{A}21_2} $, and $\tau_{\mathbf{D}21_2} = \frac{T_{\mathbf{D}21_2}}{2t_\mathbf{A}}$. To combine these two cases and calculate $\tau_{\mathbf{D}21}$, we take an average based on the fraction of time that each of these cases occurs, i.e., $\tau_{\mathbf{D}21} = a_{21} \tau_{\mathbf{D}21_1} + a_{22} \tau_{\mathbf{D}21_2}$.

\textbf{Case 2.2}: Here, the defender's discovery move occurs after the protection time. This case is similar to sub-case 1.2, and we therefore omit its detailed representation.

By combining these two sub-cases, we have:

\begin{equation}
\delta_{\mathbf{D}2} =  x \delta_{\mathbf{D}21} + \left(1-x\right)\delta_{\mathbf{D}22} = x \left(a_{21} + 2a_{22}\right) t_{\mathbf{A}} + \left(1-x\right) t_{\mathbf{A}} =
2t_{\mathbf{A}} - \left(\frac{t_{\mathbf{A}} - \mathbf{p-d-r}}{t_{\mathbf{D}}}\right)t_{\mathbf{A}}.
\label{eq:Case2tD}
\end{equation}

According to the above equation, the average time between the defender's two consecutive defensive moves is $t_{\mathbf{A}} \leq \delta_{\mathbf{D}2} \leq 2t_{\mathbf{A}}$ and it depends on $t_{\mathbf{A}}$, $\mathbf{p}$, $\mathbf{d}$, and $\mathbf{r}$. 

\begin{multline}
\tau_{\mathbf{D}2} = x \tau_{\mathbf{D}21} + \left(1-x\right)\tau_{\mathbf{D}22} = x \left(a_{21} \tau_{\mathbf{D}21_1} + a_{22} \tau_{\mathbf{D}21_2}\right) + \left(1-x\right)\tau_{\mathbf{D}22} = \\
\frac{1}{4t_{\mathbf{A}}t_{\mathbf{D}}} \left(-t_{\mathbf{A}}^2-t_{\mathbf{D}}^2 + 4t_{\mathbf{A}}t_{\mathbf{D}} + 2\mathbf{p}t_{\mathbf{A}} -2t_{\mathbf{D}}\left(\mathbf{d+r}\right) + \left(\mathbf{p+d+r}\right) \left(\mathbf{d+r-p}\right) \right).
\label{eq:Case2TD}
\end{multline}

Contrary to Case 1, the defender's average fraction of time for controlling the resource depends on protection time. 

\textbf{Case 3}:  $t_{\mathbf{A}} - \mathbf{d-r} \leq t_{\mathbf{D}} \leq t_{\mathbf{A}}$ 

Similar to the two previous cases, the defender's discovery move occurs either during the protection time or after the protection time. Thus, we consider two sub-cases as follows. 

\textbf{Case 3.1}: Let $x = \frac{\mathbf{p}}{t_{\mathbf{D}}}$. Consider a given attacker move interval $[t,t+t_{\mathbf{A}}]$. The defender's discovery move occurs uniformly with probability $x$ during the protection time interval, i.e., $[t,t+\mathbf{p}]$, which is not effective. The next defender's discovery move occurs in interval $[t+t_{\mathbf{D}},t+t_{\mathbf{D}}+\mathbf{p}]$. This discovery move results in the detection of the attack and initiates the defensive move. The defensive move is effective after the reaction time that is in interval $[t+t_{\mathbf{D}}+\mathbf{d+r},t+t_{\mathbf{D}}+\mathbf{p+d+r}]$. This means that all the defensive moves in this interval are effective after the attacker's next move. Thus, the attacker's next move is not effective. Therefore, in each $2t_{\mathbf{A}}$, the defender resets the resource exactly once, i.e., $\delta_{\mathbf{D}31} = 2t_{\mathbf{A}}$. 

To calculate the average time that the attacker controls the resource, note that the defensive move is effective (after the reaction time) in interval $[t+t_{\mathbf{D}}+\mathbf{d+r},t+t_{\mathbf{D}}+\mathbf{p+d+r}]$ uniformly. Half of this time interval belongs to the attacker on average. Moreover, the attacker is the owner of the resource after the protection time until the defensive move's effectiveness. Therefore, we have $T_{\mathbf{A}31} = t_{\mathbf{D}} + \mathbf{d+r} - \frac{\mathbf{p}}{2}$. The rest of the time, the defender is the owner of the resource, i.e., $T_{\mathbf{D}31} = 2t_{\mathbf{A}} - T_{\mathbf{A}31}$. Thus, we have $\tau_{\mathbf{D}31} = \frac{T_{\mathbf{D}31}}{2t_{\mathbf{A}}}$.

\textbf{Case 3.2}: Consider a given attacker move interval $[t,t+t_{\mathbf{A}}]$. The probability that the defender's discovery move occurs after the protection time interval, i.e., $[t+\mathbf{p}, t+t_{\mathbf{D}}]$, is equal to $1-x$. The defender's discovery move in this interval results in detection and the defensive move. The defensive move is effective in interval $[t+\mathbf{p+d+r}, t+t_{\mathbf{D}}+\mathbf{d+r}]$. A defensive move may be effective after the attacker's next attack, since $t_{\mathbf{A}} \leq t_{\mathbf{D}} + \mathbf{d+r}$, and the attacker's next move will not be successful. Hence, the defender's defensive move occurs once in $t_{\mathbf{A}}$ or $2t_{\mathbf{A}}$. Similar to case 2.1, the fraction of time that the defender's defensive move occurs only once in each $t_{\mathbf{A}}$ is equal to $a_{31} = \frac{t_{\mathbf{A}} - \mathbf{p-d-r}}{t_{\mathbf{D}}-\mathbf{p}}$. The rest of the time, i.e., $a_{32} = 1 - a_{31}$, the defender's defensive move occurs once in each $2t_{\mathbf{A}}$. Therefore, we have $\delta_{\mathbf{D}32} = a_{31}t_{\mathbf{A}} + a_{32}2t_{\mathbf{A}} $.

To calculate the average fraction of time that each player controls the resource, we differentiate between the cases when the defender's defensive move occurs in $t_{\mathbf{A}}$ and $2t_{\mathbf{A}}$. When this time is equal to $t_{\mathbf{A}}$, consider a given attacker move interval $[t,t+t_{\mathbf{A}}]$.  The attacker is the owner of the resource after the protection time until the defensive move becomes effective. The defensive move is effective in interval $[t+\mathbf{p+d+r},t+ t_\mathbf{A}]$ uniformly. On average, the attacker is the owner of the resource for half of this time interval plus the time before the defense effectiveness except the protection time. Hence, we have $T_{\mathbf{A}32_1}= \frac{t_{\mathbf{A}} + \mathbf{p+d+r}}{2} - \mathbf{p}$. And the defender is the owner for the remainder, i.e., $T_{\mathbf{D}32_1} = t_{\mathbf{A}} - T_{\mathbf{A}32_1} $. Dividing these two values by $t_{\mathbf{A}}$ gives the \textit{average fraction of time} that each player controls the resource. Therefore, we have $\tau_{\mathbf{D}32_1} = \frac{T_{\mathbf{D}32_1}}{t_\mathbf{A}}$. 

When the defensive move occurs in each $2t_{\mathbf{A}}$, consider a given two consecutive moves interval for the attacker $[t,t+2t_{\mathbf{A}}]$. In a similar way, we have $T_{\mathbf{A}32_2} = \frac{t_{\mathbf{A}} + t_{\mathbf{D}} + \mathbf{d+r}}{2} - \mathbf{p}$,  $T_{\mathbf{D}32_2} = 2t_{\mathbf{A}} - T_{\mathbf{A}32_2} $, and $\tau_{\mathbf{D}32_2} = \frac{T_{\mathbf{D}32_2}}{2t_\mathbf{A}}$. To combine these two cases and calculate $\tau_{\mathbf{D}32}$, we take an average based on the fraction of time that each of these cases occurs, i.e., $\tau_{\mathbf{D}32} = a_{31} \tau_{\mathbf{D}32_1} + a_{32} \tau_{\mathbf{D}32_2}$.

By combining these two sub-cases we have:

\begin{equation}
\delta_{\mathbf{D}3} =  x \delta_{\mathbf{D}31} + \left(1-x\right)\delta_{\mathbf{D}32} = x2t_{\mathbf{A}}  + \left(1-x\right) \left(a_{31} + 2a_{32}\right) t_{\mathbf{A}} = 
2t_{\mathbf{A}} - \left(\frac{t_{\mathbf{A}} - \mathbf{p-d-r}}{t_{\mathbf{D}}}\right)t_{\mathbf{A}}.
\label{eq:Case3tD}
\end{equation}

One of the boundary points for this case is $t_{\mathbf{A}} = t_{\mathbf{D}}$. Inserting $t_{\mathbf{A}} = t_{\mathbf{D}}$ in the above equation gives that $\delta_{\mathbf{D}3} = t_{\mathbf{D}} + \mathbf{p+d+r}$. One might expect that for $t_{\mathbf{A}} = t_{\mathbf{D}}$, the value of $\tau_{\mathbf{D}3}$ should be equal to $t_{\mathbf{A}}$. Note that here, we want to calculate the average time between the defender's two consecutive defensive moves. When $t_{\mathbf{A}} = t_{\mathbf{D}}$, if the defender's discovery move occurs after the attack success, the defensive move occurs in each $t_{\mathbf{A}}$. But, if the discovery move occurs during the protection time, the defensive move occurs once in each $2t_{\mathbf{A}}$. Therefore, the average value is not equal to $t_{\mathbf{A}}$. 

\begin{multline}
\tau_{\mathbf{D}3} = x \tau_{\mathbf{D}31} + \left(1-x\right)\tau_{\mathbf{D}32} = x \tau_{\mathbf{D}31}  + \left(1-x\right)\left(a_{31} \tau_{\mathbf{D}32_1} + a_{32} \tau_{\mathbf{D}32_2}\right) = \\
\frac{1}{4t_{\mathbf{A}}t_{\mathbf{D}}} \left(-t_{\mathbf{A}}^2-t_{\mathbf{D}}^2 + 4t_{\mathbf{A}}t_{\mathbf{D}} + 2\mathbf{p}t_{\mathbf{A}} -2t_{\mathbf{D}}\left(\mathbf{d+r}\right) + \left(\mathbf{p+d+r}\right) \left(\mathbf{d+r-p}\right) \right).
\label{eq:Case3TD}
\end{multline}

Note that the above two values are the same as what we have calculated for Case 2. It is interesting to see that while the situation for each case's payoff calculation is different, it yields the same values for both cases' payoff functions. 

\textbf{Case 4}: $t_{\mathbf{A}} \leq t_{\mathbf{D}} \leq t_{\mathbf{A}} + \mathbf{p}$

The above condition implies that the attacker moves once or twice in a given discovery move interval, because we have $t_{\mathbf{D}} \geq t_{\mathbf{A}}$ and $t_{\mathbf{D}} \leq t_{\mathbf{A}} + \mathbf{p} < 2t_{\mathbf{A}}$ (note that we assume $t_{\mathbf{A}} \geq \mathbf{p+d+r}$). Consider a given discovery move interval $[t, t+t_{\mathbf{D}}]$. The defender's discovery move at an arbitrary time $t$ results in detection and the defensive move. Based on the attack occurrence during this interval, we consider three sub-cases.

\textbf{Case 4.1}: Let $y_1 = \frac{\mathbf{d+r}}{t_{\mathbf{A}}}$. The probability that the attacker's move occurs within the $\mathbf{d+r}$ amount of time after the discovery move is equal to $y_1$. The attacker's move during this interval does not lead to a successful attack, since the defender's discovery move at time $t$ yields the detection and the defensive move. 

The next attacker's move occurs in interval $[t+t_{\mathbf{A}},t+t_{\mathbf{A}}+ \mathbf{d+r}]$. The attacker has to spend $\mathbf{p}$ amount of time to execute its attack successfully. Thus, all of the attacker's moves in this interval are successful after the defender's next discovery move. The next discovery move does not lead to the detection and the defensive move. Consequently, in each $2t_{\mathbf{D}}$, the defender resets the resource only once, i.e., $\tau_{\mathbf{D}41} = 2t_{\mathbf{D}}$. 

The defender is the owner of the resource after the defensive move is effective, i.e., $t+\mathbf{d+r}$ has elapsed, until the attacker completely compromises the defender's system. The attacker's move effectiveness is uniformly occurring in interval $[t+t_{\mathbf{A}}+\mathbf{p},t+t_{\mathbf{A}}+\mathbf{p+d+r}]$. Due to the uniform distribution, half of this time interval belongs to the defender on average. Therefore, we have $T_{\mathbf{D}41} = t_{\mathbf{A}} + \mathbf{p} - \frac{\mathbf{d+r}}{2}$ and $\tau_{\mathbf{D}41} = \frac{T_{\mathbf{D}41}}{2t_{\mathbf{D}}}$. The rest belongs to the attacker which is equal to $T_{\mathbf{A}41} = 2t_{\mathbf{D}} - T_{\mathbf{D}41}$. 

\textbf{Case 4.2}: Let $y_2 = \frac{t_{\mathbf{D}} - \mathbf{p-d-r}}{t_{\mathbf{A}}}$. The probability that the attacker can successfully compromise the defender's resource completely before the next discovery move is equal to $y_2$.\footnote{If the attacker moves in interval $[t+\mathbf{d+r},t+t_{\mathbf{D}}- \mathbf{p}]$, the attacker's attack is successful before the defender's next discovery move at $t+t_{\mathbf{D}}$. Note that the attacker requires $\mathbf{p}$ amount of time to compromise the defender's system.} In this case, the defender can detect the attack by its next discovery move which means that the defensive move occurs once in each $t_{\mathbf{D}}$, i.e., $\delta_{\mathbf{D}42} = t_{\mathbf{D}}$. 

The defender is the owner of the resource after its defensive move is effective, i.e., $t+\mathbf{d+r}$ has elapsed, until the attacker completely compromises the defender's system. The attacker's move effectiveness is uniformly occurring in interval $[t+\mathbf{p+d+r},t+t_{\mathbf{D}}]$. Due to uniform distribution, half of this interval belongs to the defender. Therefore, we have $T_{\mathbf{D}42} = \frac{\mathbf{t_{\mathbf{D}} + p-d-r}}{2}$ and $\tau_{\mathbf{D}42} = \frac{T_{\mathbf{D}42}}{t_{\mathbf{D}}}$. The rest belongs to the attacker, i.e., $T_{\mathbf{A}42} = t_{\mathbf{D}} - T_{\mathbf{D}42}$.

\textbf{Case 4.3}: Let $y_3 = 1 - y_1 - y_2 = \frac{t_{\mathbf{A}} - t_{\mathbf{D}} +\mathbf{p} }{t_{\mathbf{A}}}$. The probability that the attacker moves in interval $[t+t_{\mathbf{D}}-\mathbf{p}, t+ t_{\mathbf{A}}]$ is equal to $y_3$. The attacker's move in this interval will be successful after the defender's next discovery move. Therefore, we have $\tau_{\mathbf{D}43} = 2t_{\mathbf{D}}$. 

The defender is the owner of the resource after its defensive move's effectiveness, i.e., $t+\mathbf{d+r}$ has passed, until the attacker completely compromises the defender's system. The attacker's move effectiveness is uniformly occurring in interval $[t+t_{\mathbf{D}},t+t_{\mathbf{D}}+\mathbf{p}]$. Due to the uniform distribution, half of this interval belongs to the defender. Therefore, we have $T_{\mathbf{D}43} = \frac{\mathbf{t_{\mathbf{D}} + t_{\mathbf{A}} + p}}{2} - \mathbf{d-r}$ and $\tau_{\mathbf{D}43} = \frac{T_{\mathbf{D}43}}{2t_{\mathbf{D}}}$. The rest belongs to the attacker, i.e., $T_{\mathbf{A}43} = 2t_{\mathbf{D}} - T_{\mathbf{D}43}$.

Similar to the previous cases, for $\delta_{\mathbf{D}4}$ and $\tau_{\mathbf{D}4}$, we have:

\begin{equation}
\delta_{\mathbf{D}4} =  y_1 \delta_{\mathbf{D}41} + y_2\delta_{\mathbf{D}42} + y_3\delta_{\mathbf{D}43} = 
2t_{\mathbf{D}} - \left(\frac{t_{\mathbf{D}} - \mathbf{p-d-r}}{t_{\mathbf{A}}}\right)t_{\mathbf{D}}.
\label{eq:Case4tD}
\end{equation}

\begin{multline}
\tau_{\mathbf{D}4} = y_1 \tau_{\mathbf{D}41} + y_2 \tau_{\mathbf{D}42} + y_3 \tau_{\mathbf{D}43} =  \\
\frac{1}{4t_{\mathbf{A}}t_{\mathbf{D}}} \left(t_{\mathbf{A}}^2+t_{\mathbf{D}}^2 + 2\mathbf{p}t_{\mathbf{A}} -2t_{\mathbf{D}}\left(\mathbf{d+r}\right) + \left(\mathbf{p+d+r}\right) \left(\mathbf{d+r-p}\right) \right).
\label{eq:Case4TD}
\end{multline}

\textbf{Case 5}: $t_{\mathbf{A}} + \mathbf{p} \leq t_{\mathbf{D}} \leq t_{\mathbf{A}} + \mathbf{p+d+r}$

Similar to the previous case, the attacker moves once or twice in a given discovery move interval. For a given discovery move interval $[t,t+t_{\mathbf{D}}]$, the defender's discovery move at time $t$ results in discovery and the defensive move. Based on attack occurrence, we have two sub-cases.

\textbf{Case 5.1}: Let $y_1 = \frac{\mathbf{d+r}}{t_{\mathbf{A}}}$. The probability that the attacker's move occurs within $\mathbf{d+r}$ amount of time after the discovery move is equal to $y_1$. The attacker's move during this interval does not lead to a successful attack, since the defender's discovery move at time $t$ yields to the detection and the defensive move. 

The next attacker's move occurs in interval $[t+t_{\mathbf{A}},t+t_{\mathbf{A}}+ \mathbf{d+r}]$. The attacker has to spend $\mathbf{p}$ amount of time to execute its attack successfully. Thus, part of the attacker's move in this interval are successful after the defender's next discovery move, i.e., interval $[t+t_{\mathbf{D}},t+t_{\mathbf{A}}+ \mathbf{p+d+r}]$, and the next discovery move does not result in detection and the defensive move. Consequently, the defender resets the resource in each $t_{\mathbf{D}}$ or $2t_{\mathbf{D}}$. 
If the attacker's next attack is successful in the interval $[t+t_{\mathbf{A}}+\mathbf{p}, t+t_{\mathbf{D}}]$, the defender's defensive move occurs in each $t_{\mathbf{D}}$. The probability that the attacker's move occurs in this interval in this sub-case, which is distributed uniformly, is equal to $a_{51} = \frac{t_{\mathbf{D}} - t_{\mathbf{A}}-\mathbf{p} }{\mathbf{d+r}}$. On average, half of this time interval belongs to the defender plus the time after the defender's defense effectiveness at time $t + \mathbf{d+r}$ until the successful compromise. Therefore, we have $T_{\mathbf{D}51_1} = \frac{t_{\mathbf{A}}+t_{\mathbf{D}} + \mathbf{p} }{2} - \mathbf{d-r}$ and $\tau_{\mathbf{D}51_1} = \frac{T_{\mathbf{D}51_1}}{t_{\mathbf{D}}}$.

In a similar way, if the attacker's next move is successful in the interval $[t+t_{\mathbf{D}},t+t_{\mathbf{A}}+\mathbf{p+d+r}]$, the defender's defensive move occurs in each $2t_{\mathbf{D}}$. The attacker's move occurs in this interval with probability $a_{52} = 1 - a_{51}$. On average, half of this time interval belongs to the defender plus the time after the defender's defense effectiveness at time $t + \mathbf{d+r}$ until the successful compromise. Therefore, we have $T_{\mathbf{D}51_2} = \frac{t_{\mathbf{A}}+t_{\mathbf{D}} + \mathbf{p-d-r} }{2} $ and $\tau_{\mathbf{D}51_2} = \frac{T_{\mathbf{D}51_2}}{2t_{\mathbf{D}}}$. By combining these two scenarios, we have $\delta_{\mathbf{D}51} = a_{51}t_{\mathbf{D}} + a_{52}2t_{\mathbf{D}}$ and $\tau_{\mathbf{D}51} = a_{51}\tau_{\mathbf{D}51_1} + a_{52}\tau_{\mathbf{D}51_2}$.
 
\textbf{Case 5.2}: Let $y = 1-y_1$. The probability that the attacker moves after the defensive move, i.e., interval $[t+\mathbf{d+r},t+t_{\mathbf{A}}]$, is equal to $y$. The defender can detect the attacks occurring in this interval, since $t_{\mathbf{A}}+\mathbf{p} \leq t_{\mathbf{D}}$. Thus, the defender resets the resource in each $t_{\mathbf{D}}$, i.e., $\delta_{\mathbf{D}52} = t_{\mathbf{D}}$. Similar to the previous cases, we have $T_{\mathbf{D}52} = \frac{t_{\mathbf{A}}-\mathbf{d-r}}{2} + \mathbf{p}$ and $\tau_{\mathbf{D}52} = \frac{T_{\mathbf{D}52} }{t_{\mathbf{D}}}$. 

For $\delta_{\mathbf{D}5}$ and $\tau_{\mathbf{D}5}$, we have:

\begin{equation}
\delta_{\mathbf{D}5} =  y_1 \delta_{\mathbf{D}51} + y\delta_{\mathbf{D}52}  = y_1 \left(a_{51} + 2a_{52}\right)t_{\mathbf{D}} + yt_{\mathbf{D}}
= 2t_{\mathbf{D}} - \left(\frac{t_{\mathbf{D}} - \mathbf{p-d-r}}{t_{\mathbf{A}}}\right)t_{\mathbf{D}},
\label{eq:Case5tD}
\end{equation}

and

\begin{multline}
\tau_{\mathbf{D}5} = y_1 \tau_{\mathbf{D}51} + y \tau_{\mathbf{D}52}  = y_1 \left(a_{51}\tau_{\mathbf{D}51_1} + a_{52}\tau_{\mathbf{D}51_2}\right) + y \tau_{\mathbf{D}52} \\
= \frac{1}{4t_{\mathbf{A}}t_{\mathbf{D}}} \left(t_{\mathbf{A}}^2+t_{\mathbf{D}}^2 + 2\mathbf{p}t_{\mathbf{A}} -2t_{\mathbf{D}}\left(\mathbf{d+r}\right) + \left(\mathbf{p+d+r}\right) \left(\mathbf{d+r-p}\right) \right).
\label{eq:Case5TD}
\end{multline}

Note that both Case 4 and Case 5 result in the same payoff functions. 

\textbf{Case 6}: $t_{\mathbf{D}} \geq t_{\mathbf{A}} + \mathbf{p+d+r}$

Consider a given discovery move interval $[t,t+t_{\mathbf{D}}]$. The attacker moves in the interval $[t+\mathbf{d+r},t+t_{\mathbf{D}}-\mathbf{p}]$ at least once, since $t_{\mathbf{D}} \geq t_{\mathbf{A}} + \mathbf{p+d+r}$. Hence, the attacker's move in this interval is detected in each $t_{\mathbf{D}}$ and we have $\delta_{\mathbf{D}6} = t_{\mathbf{D}}$. 

After the defensive move occurs successfully, i.e., $t+\mathbf{d+r}$ has passed, the defender is the owner of the resource until the successful compromise occurs. The attacker moves in interval $[t+\mathbf{d+r},t+t_{\mathbf{A}}+\mathbf{d+r}]$ with uniform distribution. Half of this time interval belongs to the defender plus the protection time. In other words, we have $T_{\mathbf{D}6} = \frac{t_{\mathbf{A}}}{2} + \mathbf{p}$ and $\tau_{\mathbf{D}6} = \frac{T_{\mathbf{D}6}}{t_{\mathbf{D}}}$.

\begin{equation}
\delta_{\mathbf{D}6} =  t_{\mathbf{D}}.
\label{eq:Case6tD}
\end{equation}

\begin{equation}
\tau_{\mathbf{D}6} = \frac{t_{\mathbf{A}}+2\mathbf{p}}{2t_{\mathbf{D}}}.
\label{eq:Case6TD}
\end{equation}

According to the equation above, the average fraction of time does not depend on discovery time and the reaction time when the time between two consecutive discovery moves is high enough, i.e., $t_{\mathbf{D}} \geq t_{\mathbf{A}} + \mathbf{p+d+r}$. This means that if the defender does not check its system state regularly, in order to increase its utility, the defender should only invest in increasing the protection time. Other parameters are not important. Furthermore, according to Equation~\ref{eq:Case6tD}, the average time between the defender's two consecutive resets is equal to the time between two consecutive discovery moves. 

\section{Analytical Results}
\label{sec:Ana}
In this section, we analyze our proposed game and, in particular, provide both players' best responses.  

\newtheorem{thm}{Theorem}

In order to calculate the defender's best response, first, we define the following three points in Definition~\ref{Def:DefenderPlausible1}. 

\newtheorem{Def}{Definition}
\begin{Def} For given $t_{\mathbf{A}}$, points $\bar{t}_{\mathbf{D}1}$, $\bar{t}_{\mathbf{D}2}$, and $\bar{t}_{\mathbf{D}3}$ are calculated as follows:

- $\bar{t}_{\mathbf{D}1} = \sqrt{2t_{\mathbf{A}} c_\mathbf{k}}$ if $\bar{t}_{\mathbf{D}1} \geq \mathbf{p+d+r} $ and $\bar{t}_{\mathbf{D}1} \leq t_{\mathbf{A}}- \mathbf{p-d-r}$.

- The solution of the following equation in $t_{\mathbf{D}}$, i.e., $\bar{t}_{\mathbf{D2}}$, if $\bar{t}_{\mathbf{D}2} \geq \mathbf{p+d+r}$ and $t_{\mathbf{A}} - \mathbf{p-d-r} \leq \bar{t}_{\mathbf{D2}} \leq t_{\mathbf{A}}$:

\begin{multline}
\frac{c_\mathbf{k}}{t_{\mathbf{D}}^2} + \frac{c_{\mathbf{D}} \left(t_{\mathbf{A}} - \mathbf{p-d-r}\right)  }{t_{\mathbf{A}} \left( 2t_{\mathbf{D}} - t_{\mathbf{A}} + \mathbf{p+d+r} \right)^2 } \\
+\frac{1}{4t_{\mathbf{A}}t_{\mathbf{D}}^2} \left( -t_{\mathbf{D}}^2 + t_{\mathbf{A}}^2 -2\mathbf{p}t_{\mathbf{A}} -  \left(\mathbf{p+d+r}\right) \left(\mathbf{d+r-p}\right)  \right) = 0.
\label{eq:Case3Sol}
\end{multline}

- The solution of the following equation in $t_{\mathbf{D}}$, i.e., $\bar{t}_{\mathbf{D3}}$, if $t_{\mathbf{A}} \leq \bar{t}_{\mathbf{D3}} \leq t_{\mathbf{A}} + \mathbf{p+d+r}$:

\begin{multline}
\frac{c_\mathbf{k}}{t_{\mathbf{D}}^2} + \frac{c_{\mathbf{D}} t_{\mathbf{A}}  }{t_{\mathbf{D}}^2 \left( 2t_{\mathbf{A}} - t_{\mathbf{D}} + \mathbf{p+d+r} \right)} -  \frac{c_{\mathbf{D}} t_{\mathbf{A}}  }{t_{\mathbf{D}} \left( 2t_{\mathbf{A}} - t_{\mathbf{D}} + \mathbf{p+d+r} \right)^2 }\\
+\frac{1}{4t_{\mathbf{A}}t_{\mathbf{D}}^2} \left( t_{\mathbf{D}}^2 - t_{\mathbf{A}}^2 -2\mathbf{p}t_{\mathbf{A}} -  \left(\mathbf{p+d+r}\right) \left(\mathbf{d+r-p}\right)  \right) = 0
\label{eq:Case4Sol}
\end{multline}

\label{Def:DefenderPlausible1}
\end{Def}

In the above definition, each point represents the critical point of a case. $\bar{t}_{\mathbf{D}1}$ represents the defender's critical point for case 1. $\bar{t}_{\mathbf{D}2}$ represents the defender's critical point for cases 2 and 3. $\bar{t}_{\mathbf{D}3}$ is the critical point for the defender's payoff in cases 4 and 5. These points in addition to the boundary points of each case provide the entire set of possible best responses by the defender. 

\begin{Def}
The members of set $\mathcal{S}(t_{\mathbf{A}})$ are defined as follows:

$\mathcal{S}(t_{\mathbf{A}}) = \{ \bar{t}_{\mathbf{D1}}, \bar{t}_{\mathbf{D2}} , \bar{t}_{\mathbf{D3}} , \mathbf{p+d+r}, t_{\mathbf{A}} - \mathbf{p-d-r}, t_{\mathbf{A}}  , t_{\mathbf{A}} + \mathbf{p+d+r}\}$. 
\label{Def:DefenderPlausible}
\end{Def}

Theorem~\ref{thm:BRdefender} represents the defender's best response.

\begin{thm}
For each value of $t_{\mathbf{A}}$, the defender's best response is calculated as follows:
\begin{equation}
BR_{\mathbf{D}} (t_{\mathbf{A}}) = \arg\max_{t_{\mathbf{D}} \in \mathcal{S}} u_{\mathbf{D}} \left( t_{\mathbf{D}} , t_{\mathbf{A}} \right).
\label{eq:DefBR}
\end{equation}
\label{thm:BRdefender}
\end{thm}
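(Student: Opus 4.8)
The plan is to read Theorem~\ref{thm:BRdefender} as a one-dimensional constrained optimization: for fixed $t_{\mathbf{A}}$ the defender maximizes $u_{\mathbf{D}}(t_{\mathbf{D}},t_{\mathbf{A}})$ over the feasible set $[\mathbf{p+d+r},\infty)$. By the case analysis of Section~\ref{sec:SysMod}, this set is partitioned into the six intervals of Cases~1--6, and on each interval $u_{\mathbf{D}}$ is obtained by plugging the corresponding closed forms for $\tau_{\mathbf{D}i}$ and $\delta_{\mathbf{D}i}$ into Equation~\ref{eq:GeneralPayDef}, hence is a rational — in particular $C^{1}$ — function of $t_{\mathbf{D}}$ there. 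The argument then needs three ingredients: (i) $u_{\mathbf{D}}(\cdot,t_{\mathbf{A}})$ is globally continuous; (ii) the only points at which its functional form genuinely changes are the four breakpoints in $\mathcal{S}(t_{\mathbf{A}})$ (besides the three critical points); and (iii) on each resulting smooth piece the maximum is attained at an endpoint of the piece or at an interior stationary point, and those stationary points are exactly $\bar{t}_{\mathbf{D}1},\bar{t}_{\mathbf{D}2},\bar{t}_{\mathbf{D}3}$ from Definition~\ref{Def:DefenderPlausible1}.

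For (i)--(ii) I would check directly that the Case~$i$ and Case~$(i{+}1)$ expressions for $\tau_{\mathbf{D}}$ and $\delta_{\mathbf{D}}$ agree at each shared boundary $t_{\mathbf{D}}\in\{t_{\mathbf{A}}-\mathbf{p-d-r},\,t_{\mathbf{A}}-\mathbf{d-r},\,t_{\mathbf{A}},\,t_{\mathbf{A}}+\mathbf{p},\,t_{\mathbf{A}}+\mathbf{p+d+r}\}$; a short substitution gives continuity. Crucially, the paper has already noted that Cases~2 and 3 share the same payoff formulas (Equations~\ref{eq:Case2tD}--\ref{eq:Case2TD} versus \ref{eq:Case3tD}--\ref{eq:Case3TD}), and likewise Cases~4 and 5. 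Hence $u_{\mathbf{D}}(\cdot,t_{\mathbf{A}})$ is a \emph{single} rational function on each of the four intervals $[\mathbf{p+d+r},\,t_{\mathbf{A}}-\mathbf{p-d-r}]$, $[t_{\mathbf{A}}-\mathbf{p-d-r},\,t_{\mathbf{A}}]$, $[t_{\mathbf{A}},\,t_{\mathbf{A}}+\mathbf{p+d+r}]$, and $[t_{\mathbf{A}}+\mathbf{p+d+r},\infty)$, so the internal boundaries $t_{\mathbf{A}}-\mathbf{d-r}$ and $t_{\mathbf{A}}+\mathbf{p}$ are not true kinks and correctly do not appear in $\mathcal{S}$.

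For (iii): on each of the three bounded pieces, a continuous function attains its maximum either at an endpoint (all endpoints lie in $\mathcal{S}$) or at an interior zero of the derivative. Differentiating $u_{\mathbf{D}1}$ — recalling from Equations~\ref{eq:Case1tD} and \ref{eq:CaseTD} that $\delta_{\mathbf{D}1}=t_{\mathbf{A}}$ is constant and $\tau_{\mathbf{D}1}$ is affine in $t_{\mathbf{D}}$ — gives the linear condition $c_\mathbf{k}/t_{\mathbf{D}}^{2}=1/(2t_{\mathbf{A}})$, whose root is $\bar{t}_{\mathbf{D}1}=\sqrt{2t_{\mathbf{A}}c_\mathbf{k}}$; differentiating the Case~2/3 payoff yields precisely Equation~\ref{eq:Case3Sol}, defining $\bar{t}_{\mathbf{D}2}$; differentiating the Case~4/5 payoff yields Equation~\ref{eq:Case4Sol}, defining $\bar{t}_{\mathbf{D}3}$. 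On the unbounded piece (Case~6) one has $u_{\mathbf{D}6}(t_{\mathbf{D}})=\big(\tfrac{t_{\mathbf{A}}+2\mathbf{p}}{2}-c_{\mathbf{D}}-c_\mathbf{k}\big)/t_{\mathbf{D}}$, which is monotone in $t_{\mathbf{D}}$: it has no interior stationary point, its maximum over $[t_{\mathbf{A}}+\mathbf{p+d+r},\infty)$ — when attained — is at the left endpoint $t_{\mathbf{A}}+\mathbf{p+d+r}\in\mathcal{S}$, and otherwise its supremum is $\lim_{t_{\mathbf{D}}\to\infty}u_{\mathbf{D}6}=0$, which is dominated by the value reached on a bounded piece whenever a best response exists. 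Collecting the candidates from all four pieces reproduces exactly $\mathcal{S}(t_{\mathbf{A}})$ of Definition~\ref{Def:DefenderPlausible}, giving $BR_{\mathbf{D}}(t_{\mathbf{A}})=\arg\max_{t_{\mathbf{D}}\in\mathcal{S}}u_{\mathbf{D}}(t_{\mathbf{D}},t_{\mathbf{A}})$.

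The main obstacle I anticipate is the "at most one relevant stationary point per piece" issue: for $\mathcal{S}$ to capture the maximizer it is not enough that $\bar{t}_{\mathbf{D}2},\bar{t}_{\mathbf{D}3}$ solve Equations~\ref{eq:Case3Sol}--\ref{eq:Case4Sol}; each bounded piece should contain a single interior critical point so that "the solution" is well defined. I would clear denominators in Equations~\ref{eq:Case3Sol}--\ref{eq:Case4Sol} to obtain a low-degree polynomial and show it has a unique root in the relevant interval — most cleanly by checking that $u_{\mathbf{D}}$ is strictly concave on that piece (a sign computation for the second derivative), which simultaneously rules out stationary points being minima. A secondary, purely mechanical obstacle is carrying out the boundary substitutions and the two derivative computations without algebra slips; these are routine but lengthy, and are precisely what the conditional clauses of Definition~\ref{Def:DefenderPlausible1} are meant to encapsulate.
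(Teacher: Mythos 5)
Your proposal follows essentially the same route as the paper's own proof: optimize piecewise over the case intervals by setting $\partial u_{\mathbf{D}}/\partial t_{\mathbf{D}}=0$ to get $\bar{t}_{\mathbf{D}1},\bar{t}_{\mathbf{D}2},\bar{t}_{\mathbf{D}3}$ (noting Cases~2/3 and 4/5 share payoffs and Case~6 has no interior maximizer), compare with the interval endpoints, and take the overall argmax over $\mathcal{S}(t_{\mathbf{A}})$. Your additions — checking continuity at the case boundaries, the sign discussion for the Case~6 numerator, and the uniqueness/concavity concern for the roots of Equations~\ref{eq:Case3Sol}--\ref{eq:Case4Sol} — are refinements in rigor on points the paper treats implicitly, not a different argument.
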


\begin{proof}
To show our results, we identify the defender's maximum points for each case and then compare the resulting payoffs of these points to each other to find the point yielding the highest payoff. This point is the defender's best response for a corresponding $t_{\mathbf{A}}$. For each case, we take the partial derivative from Equation~\ref{eq:GeneralPayDef} with respect to $t_{\mathbf{D}}$ which is represented in the following equation and set the partial derivative to zero to find critical points. 

\begin{equation}
\frac{\partial u_{\mathbf{D}} (t_{\mathbf{D}},t_{\mathbf{A}})}{\partial t_{\mathbf{D}}} = \frac{\partial \tau_{\mathbf{D}i}}{\partial t_{\mathbf{D}}} + \frac{c_{\mathbf{D}}}{\left( \delta_{\mathbf{D}i} \right)^2} \left( \frac{\partial \delta_{\mathbf{D}i}}{\partial t_{\mathbf{D}i}} \right) + \frac{c_\mathbf{k}}{t_{\mathbf{D}}^2}.
\label{eq:partialDer3}
\end{equation}

\textbf{Case 1}: Setting Equation~\ref{eq:partialDer3} to zero gives $\bar{t}_{\mathbf{D}1}$. The defender's payoff function in case 1 is increasing in $[0, \bar{t}_{\mathbf{D}1}]$ and decreasing in $[\bar{t}_{\mathbf{D}1}, \infty]$. Hence, the defender's payoff function is maximized at $median\{\mathbf{p+d+r},\bar{t}_{\mathbf{D}1}, t_{\mathbf{A}}-\mathbf{p-d-r}\}$, where median is a middle value of a set.


\textbf{Case 2 and Case 3}:  
Setting Equation~\ref{eq:partialDer3} to zero gives Equation~\ref{eq:Case3Sol}. The solution of this equation, i.e., $\bar{t}_{\mathbf{D}2}$, gives the critical point(s) of the defender's payoff function if $t_{\mathbf{A}} -\mathbf{p-d-r} \leq \bar{t}_{\mathbf{D2}} \leq t_{\mathbf{A}}$. Therefore, we should compare the defender's payoff at $\bar{t}_{\mathbf{D}2}$ with $t_{\mathbf{A}}$ and $t_{\mathbf{A}} - \mathbf{p-d-r}$ to find the local maximum.


\textbf{Case 4 and Case 5}: In a similar way, by taking the partial derivative from the defender's payoff function in Case 4 and Case 5 with respect to $t_{\mathbf{D}}$ and setting it to zero we have Equation~\ref{eq:Case4Sol}. The solution of this equation, i.e., $\bar{t}_{\mathbf{D}3}$ is extremum if $t_{\mathbf{A}} \leq \bar{t}_{\mathbf{D3}} \leq t_{\mathbf{A}} + \mathbf{p+d+r}$. In order to find the maximum for this case, we compare the defender's payoffs at $t_{\mathbf{A}}$, $t_{\mathbf{A}}+\mathbf{p+d+r}$, and $\bar{t}_{\mathbf{D}3}$ with each other. The point with the highest payoff is the maximum for this case.


\textbf{Case 6}: In this case, the defender's payoff is decreasing in $t_{\mathbf{D}}$. Thus, the defender's payoff is maximized at $t_{\mathbf{A} }+ \mathbf{p+d+r}$. 


By comparing the resulting payoffs of different cases, the one with the highest payoff provides the defender's best response for given value of $t_{\mathbf{A}}$.
\end{proof}

In order to calculate the attacker's best response, we follow an equivalent approach as utilized for the defender's best response (see Theorem~\ref{thm:BRdefender}). First, we identify the attacker's possible best responses for each case, and then compare them in order to identify the attacker's best response.

\begin{Def}
	For given $t_{\mathbf{D}}$, points $\bar{t}_{\mathbf{A}1}$, $\bar{t}_{\mathbf{A}2}$, and $\bar{t}_{\mathbf{A}3}$ are calculated as follows:
	
	- $\bar{t}_{\mathbf{A}1} = \sqrt{ 2t_{\mathbf{D}} c_{\mathbf{A}} }$ if $\bar{t}_{\mathbf{A}1} \geq \mathbf{p+d+r}$. 
	
	- The solution of the following equation in $t_{\mathbf{A}}$, i.e., $\bar{t}_{\mathbf{A}2}$, if $t_{\mathbf{D}} \leq \bar{t}_{\mathbf{A}1} \leq t_{\mathbf{D}} + \mathbf{p+d+r}$:
	
	\begin{equation}
	\frac{c_{\mathbf{A}}}{t_{\mathbf{A}}^2} - \frac{ t_{\mathbf{D}}^2 - t_{\mathbf{A}}^2 + 2\left(\mathbf{d+r}\right)t_{\mathbf{D}} - \left(\mathbf{p+d+r}\right)\left(\mathbf{d+r-p}\right) }{4t_{\mathbf{A}}^2 t_{\mathbf{D}}} =0.
	\label{eq:Attcase1}
	\end{equation}
	
	- The solution of the following equation in $t_{\mathbf{A}}$, i.e., $\bar{t}_{\mathbf{A}3}$, if $\bar{t}_{\mathbf{A}3} \geq \mathbf{p+d+r}$ and $t_{\mathbf{D}}-\mathbf{p-d-r} \leq \bar{t}_{\mathbf{A}3} \leq t_{\mathbf{D}}$:
	
	\begin{equation}
		\frac{c_{\mathbf{A}}}{t_{\mathbf{A}}^2} - \frac{ t_{\mathbf{A}}^2 - t_{\mathbf{D}}^2 + 2\left(\mathbf{d+r}\right)t_{\mathbf{D}} - \left(\mathbf{p+d+r}\right)\left(\mathbf{d+r-p}\right) }{4t_{\mathbf{A}}^2 t_{\mathbf{D}}} =0.
	\label{eq:Attcase4}
	\end{equation}
	
\end{Def}

In the above definition, each point represents the critical point of a case. $\bar{t}_{\mathbf{A}1}$ represents the attacker's critical point for case 6. $\bar{t}_{\mathbf{A}2}$ represents the attacker's critical point for cases 2 and 3. $\bar{t}_{\mathbf{A}3}$ is the critical point for the attacker's payoff in cases 4 and 5. These points in addition to the boundary points of each case provide the entire set of possible best responses by the defender.

\begin{Def}
	The members of set $\mathcal{V}(t_{\mathbf{D}})$ are defined as follows:
	
	$\mathcal{V}(t_{\mathbf{D}}) = \{ \bar{t}_{\mathbf{A1}}, \bar{t}_{\mathbf{A2}} , \bar{t}_{\mathbf{A3}} , \mathbf{p+d+r}, t_{\mathbf{D}} - \mathbf{p-d-r}, t_{\mathbf{D}}  , t_{\mathbf{D}} + \mathbf{p+d+r}\}$. 
	\label{Def:AttPlausible}
\end{Def}

Theorem~\ref{thm:BRAtt} represents the attacker's best response.

\begin{thm}
	For each value of $t_{\mathbf{D}}$, the attacker's best response is calculated as follows:
	\begin{equation}
	BR_{\mathbf{A}} (t_{\mathbf{D}}) = \arg\max_{t_{\mathbf{A}} \in \mathcal{V}} u_{\mathbf{A}} \left( t_{\mathbf{D}} , t_{\mathbf{A}} \right).
	\label{eq:AttBR}
	\end{equation}
	\label{thm:BRAtt}
\end{thm}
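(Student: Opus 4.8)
The plan is to mirror the argument used for Theorem~\ref{thm:BRdefender}: fix $t_{\mathbf{D}}$ and maximize the attacker's payoff $u_{\mathbf{A}}(t_{\mathbf{D}},t_{\mathbf{A}}) = (1-\tau_{\mathbf{D}i}) - c_{\mathbf{A}}/t_{\mathbf{A}}$ of Equation~\ref{eq:GeneralPayAtt} over $t_{\mathbf{A}} \in [\mathbf{p+d+r},\infty)$. First I would observe that, as $t_{\mathbf{A}}$ increases over this ray, the six cases of Section~\ref{sec:SysMod} partition it into consecutive closed intervals with breakpoints $t_{\mathbf{D}}-\mathbf{p-d-r}$, $t_{\mathbf{D}}-\mathbf{p}$, $t_{\mathbf{D}}$, $t_{\mathbf{D}}+\mathbf{d+r}$, $t_{\mathbf{D}}+\mathbf{p+d+r}$ (Case~6, then Cases~5, 4, 3, 2, and finally~1), and that on each interval $u_{\mathbf{A}}$ is a smooth rational function of $t_{\mathbf{A}}$ obtained by substituting the relevant closed form for $\tau_{\mathbf{D}i}$ derived there. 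Because the domain is the union of these closed intervals, the global maximum is attained at some interval's maximizer, and on each interval the maximizer is either an interior stationary point (a root of $\partial u_{\mathbf{A}}/\partial t_{\mathbf{A}} = 0$ lying in that interval) or an endpoint; comparing the finitely many resulting values then yields $BR_{\mathbf{A}}(t_{\mathbf{D}})$.

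Next I would carry out the case-by-case first-order analysis. In Case~1, $\tau_{\mathbf{D}1}$ of Equation~\ref{eq:CaseTD} depends on $t_{\mathbf{A}}$ only through $1/t_{\mathbf{A}}$, so $u_{\mathbf{A}}$ collapses to $(t_{\mathbf{D}}/2 + \mathbf{d+r} - c_{\mathbf{A}})/t_{\mathbf{A}}$, which is monotone; hence on Case~1 the maximum is at the left endpoint $t_{\mathbf{A}} = t_{\mathbf{D}}+\mathbf{p+d+r}$ whenever $t_{\mathbf{D}}/2 + \mathbf{d+r} \geq c_{\mathbf{A}}$. Because $\tau_{\mathbf{D}2} = \tau_{\mathbf{D}3}$ (Equations~\ref{eq:Case2TD}--\ref{eq:Case3TD}) and $\tau_{\mathbf{D}4} = \tau_{\mathbf{D}5}$ (Equations~\ref{eq:Case4TD}--\ref{eq:Case5TD}), $u_{\mathbf{A}}$ is in fact smooth across the seams $t_{\mathbf{A}}=t_{\mathbf{D}}+\mathbf{d+r}$ and $t_{\mathbf{A}}=t_{\mathbf{D}}-\mathbf{p}$, so one differentiates a single common formula on $[t_{\mathbf{D}},t_{\mathbf{D}}+\mathbf{p+d+r}]$ and on $[t_{\mathbf{D}}-\mathbf{p-d-r},t_{\mathbf{D}}]$: setting $\partial u_{\mathbf{A}}/\partial t_{\mathbf{A}}=0$ gives Equation~\ref{eq:Attcase1} with root $\bar{t}_{\mathbf{A}2}$ for Cases~2--3 and Equation~\ref{eq:Attcase4} with root $\bar{t}_{\mathbf{A}3}$ for Cases~4--5, each retained only if a real solution lies in the corresponding interval. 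In Case~6, $\tau_{\mathbf{D}6}$ of Equation~\ref{eq:Case6TD} makes $u_{\mathbf{A}}$ strictly concave in $t_{\mathbf{A}}$ (indeed $\partial^{2}u_{\mathbf{A}}/\partial t_{\mathbf{A}}^{2} = -2c_{\mathbf{A}}/t_{\mathbf{A}}^{3} < 0$) with unconstrained maximizer $\bar{t}_{\mathbf{A}1} = \sqrt{2t_{\mathbf{D}}c_{\mathbf{A}}}$, so the constrained Case~6 maximizer is the median of $\{\mathbf{p+d+r},\bar{t}_{\mathbf{A}1},t_{\mathbf{D}}-\mathbf{p-d-r}\}$.

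Collecting the maximizers produced by all six cases yields the three stationary points $\bar{t}_{\mathbf{A}1},\bar{t}_{\mathbf{A}2},\bar{t}_{\mathbf{A}3}$ together with the boundary points $\mathbf{p+d+r}$, $t_{\mathbf{D}}-\mathbf{p-d-r}$, $t_{\mathbf{D}}$, and $t_{\mathbf{D}}+\mathbf{p+d+r}$ (the seams $t_{\mathbf{D}}-\mathbf{p}$ and $t_{\mathbf{D}}+\mathbf{d+r}$ need not be listed, by the smoothness remark), which is exactly the set $\mathcal{V}(t_{\mathbf{D}})$ of Definition~\ref{Def:AttPlausible}. The best response is then $\arg\max_{t_{\mathbf{A}}\in\mathcal{V}}u_{\mathbf{A}}(t_{\mathbf{D}},t_{\mathbf{A}})$, with the convention that any listed point violating $t_{\mathbf{A}}\geq\mathbf{p+d+r}$ or falling outside its generating interval is discarded, since it is then dominated by a retained endpoint already in $\mathcal{V}$.

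The step I expect to be the main obstacle is the unbounded interval of Case~1: unlike the defender's problem, where the unbounded Case~6 has a payoff decreasing in $t_{\mathbf{D}}$ and hence an endpoint optimum, here the attacker's payoff on Case~1 can be increasing in $t_{\mathbf{A}}$ (precisely when $c_{\mathbf{A}} > t_{\mathbf{D}}/2 + \mathbf{d+r}$), and since the explicit form $(t_{\mathbf{D}}/2 + \mathbf{d+r} - c_{\mathbf{A}})/t_{\mathbf{A}}$ then increases toward $0$ without attaining it, one must rule out an escaping maximizing sequence. The clean resolution is that a finite maximizer in $\mathcal{V}$ exists whenever some candidate in $\mathcal{V}$ attains a nonnegative value, and the complementary ``no effective attack'' regime ($t_{\mathbf{A}}\to\infty$) is the degenerate situation the statement sets aside. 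The remaining work is routine bookkeeping: confirming continuity of $u_{\mathbf{A}}$ (so each closed interval does contain its maximizer) and checking, for each interior root of the first-order condition, whether it is a local maximum or minimum via the sign of $\partial^{2}u_{\mathbf{A}}/\partial t_{\mathbf{A}}^{2}$ there — for Cases~4--5 and~6 the stationary point is a genuine maximum, while for Cases~2--3 it turns out to be a minimum, so that interval's maximum sits at an endpoint; keeping $\bar{t}_{\mathbf{A}2}$ in $\mathcal{V}$ is nonetheless harmless, as a non-maximizing candidate can only fail to be the $\arg\max$.
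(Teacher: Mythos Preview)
Your proposal is correct and follows essentially the same case-by-case first-order approach as the paper: differentiate $u_{\mathbf{A}}$ on each regime, collect the interior stationary points $\bar{t}_{\mathbf{A}1},\bar{t}_{\mathbf{A}2},\bar{t}_{\mathbf{A}3}$ together with the interval endpoints into $\mathcal{V}$, and take the $\arg\max$. You are in fact more careful than the paper on Case~1, where the paper simply asserts the attacker's payoff is decreasing in $t_{\mathbf{A}}$, whereas you correctly compute $u_{\mathbf{A}}=(t_{\mathbf{D}}/2+\mathbf{d+r}-c_{\mathbf{A}})/t_{\mathbf{A}}$, identify the monotonicity condition, and flag the degenerate escaping-to-infinity regime that the paper leaves implicit.
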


\begin{proof}
	To show our results, we identify the attacker's maximum points for each case and then compare the resulting payoffs of these points to each other to find the point yielding the highest payoff. This point is the attacker's best response for a corresponding $t_{\mathbf{D}}$. For each case, we take the partial derivative from Equation~\ref{eq:GeneralPayAtt} with respect to $t_{\mathbf{A}}$ which is represented in the following equation and set the partial derivative to zero to find critical points. 
	
	\begin{equation}
	\frac{\partial u_{\mathbf{A}} (t_{\mathbf{D}},t_{\mathbf{A}})}{\partial t_{\mathbf{A}}} = - \frac{\partial \tau_{\mathbf{D}i}}{\partial t_{\mathbf{D}}} + \frac{c_{\mathbf{A}}}{t_{\mathbf{A}}^2}.
	\label{eq:partialDerAtt}
	\end{equation}

	\textbf{Case 1}: In this case, the attacker's payoff is decreasing in $t_{\mathbf{A}}$. Thus, the attacker's payoff is maximized at $t_{\mathbf{D} }+ \mathbf{p+d+r}$.
	
	
	\textbf{Case 2 and Case 3}:  
	Setting Equation~\ref{eq:partialDerAtt} to zero gives Equation~\ref{eq:Attcase1}. The solution of this equation, i.e., $\bar{t}_{\mathbf{A}2}$, gives the critical point(s) of the attacker's payoff function if $t_{\mathbf{D}} \leq \bar{t}_{\mathbf{A}2} \leq t_{\mathbf{D}} + \mathbf{p+d+r}$. Therefore, we should compare the attacker's payoff at $\bar{t}_{\mathbf{A}2}$ with $t_{\mathbf{D}}$ and $t_{\mathbf{D}} + \mathbf{p+d+r}$ to find the local maximum.
	
	
	\textbf{Case 4 and Case 5}: In a similar way, by taking the partial derivative from the attacker's payoff function in Case 4 and Case 5 with respect to $t_{\mathbf{A}}$ and setting it to zero gives Equation~\ref{eq:Attcase4}. The solution of this equation, i.e., $\bar{t}_{\mathbf{A}3}$ is extremum if $t_{\mathbf{D}} - \mathbf{p-d-r} \leq \bar{t}_{\mathbf{A}3} \leq t_{\mathbf{D}}$. In order to find the maximum for this case, we compare the defender's payoffs at $t_{\mathbf{D}}$, $t_{\mathbf{D}}-\mathbf{p-d-r}$, and $\bar{t}_{\mathbf{A}3}$ with each other. The point with the highest payoff is the maximum for this case.
	
	
	\textbf{Case 6}:  Setting Equation~\ref{eq:partialDerAtt} to zero gives $\bar{t}_{\mathbf{A}1}$. The attacker's payoff function in Case 1 is increasing in $[0, \bar{t}_{\mathbf{A}1}]$ and decreasing in $[\bar{t}_{\mathbf{A}1}, \infty]$. Hence, the attacker's payoff function is maximized at $median\{\mathbf{p+d+r},\bar{t}_{\mathbf{A}1}, t_{\mathbf{D}}-\mathbf{p-d-r}\}$, where median is a middle value of a set.
	
	
By comparing the resulting payoff of all cases, the one with the highest payoff provides the attacker's best response for given value of $t_{\mathbf{A}}$.
\end{proof}

\section{Numerical Illustration}
\label{sec:Num}
In this section, we evaluate our findings numerically. 
First, we study the effect of $\mathbf{p}$, $\mathbf{d}$, and $\mathbf{r}$ on both players' best responses. Second, we consider the effect of $c_{\mathbf{D}}$ and $c_\mathbf{k}$ on the defender's best response. Third, we investigate the role of $c_{\mathbf{A}}$ on the attacker's best response. Then, we investigate the existence of Nash equilibria in our proposed game. 

\begin{figure}
	\centering
	\begin{subfigure} [Defender's best response]{ %
			\includegraphics[scale=0.35]{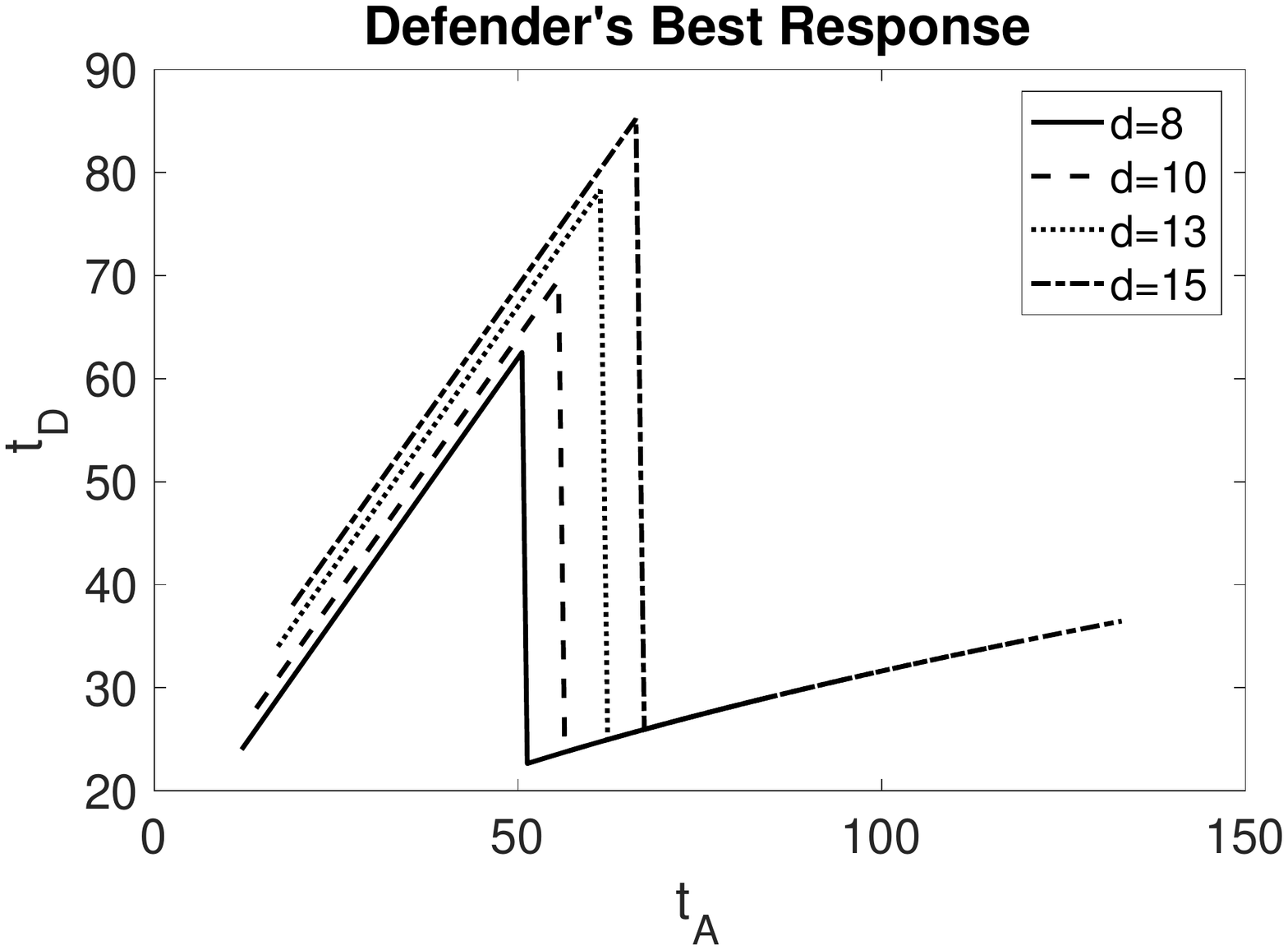}
			\label{fig:BrDp}}
	\end{subfigure}
	\begin{subfigure} [Attacker's best response]{ %
			\includegraphics[scale=0.35]{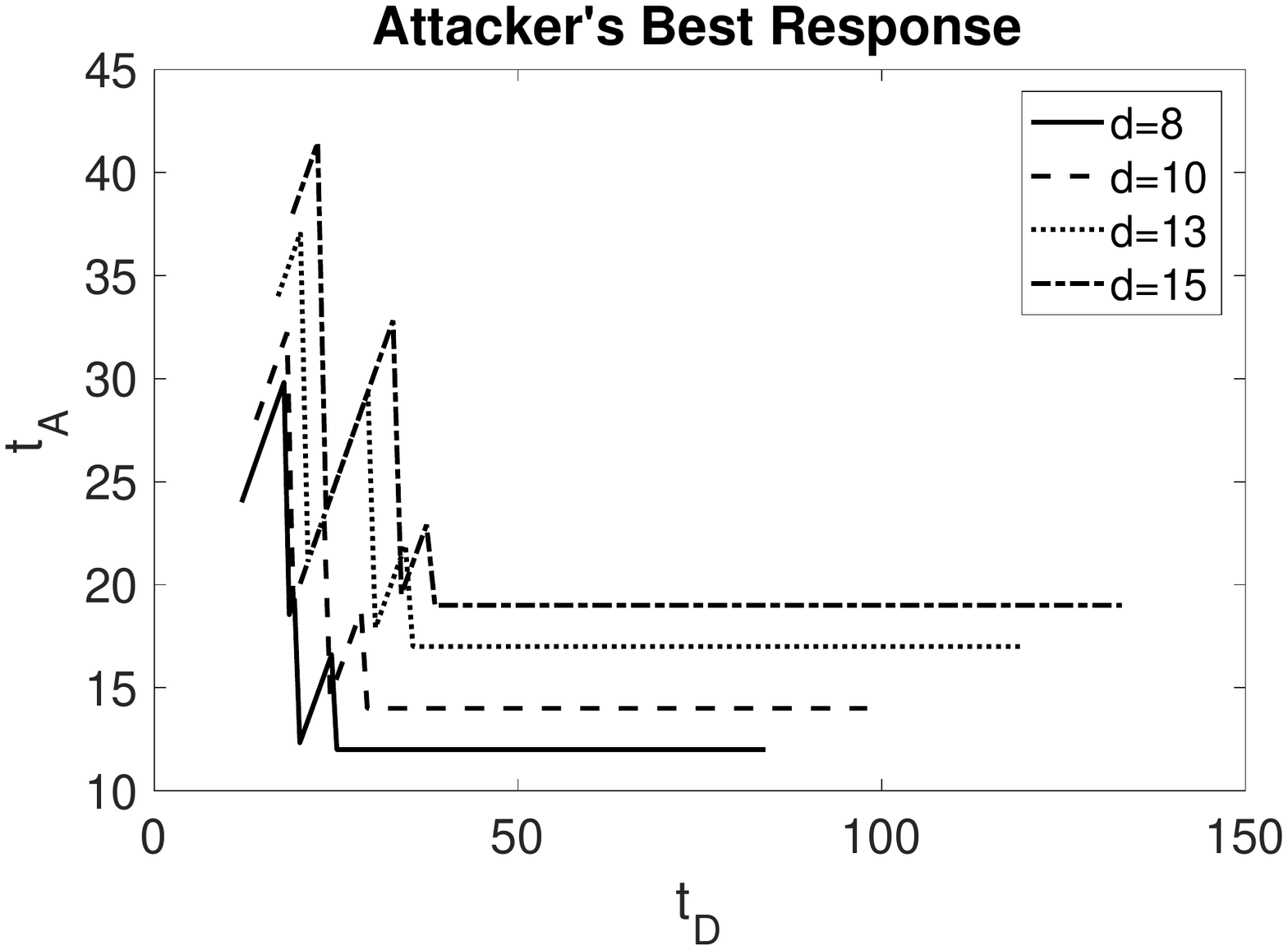} 			
			\label{fig:BrAp}}
	\end{subfigure}
	\caption{Players' best responses for different values of $\mathbf{d}$. We have $c_{\mathbf{D}}=2$, $c_\mathbf{k}=5$, $c_{\mathbf{A}} = 0.5$, $\mathbf{p}=3$, and $\mathbf{r}=1$. }
	\label{fig:BRp}
\end{figure}

Figures~\ref{fig:BrDp} and \ref{fig:BrAp} represent the defender's and the attacker's best responses for different values of $\mathbf{d}$, respectively. For these two plots, we have $\mathbf{p}=3$, $\mathbf{r}=1$, $c_{\mathbf{D}}=2$, $c_\mathbf{k}=5$, and $c_{\mathbf{A}} = 0.5$. 
Note that we assume that $t_{\mathbf{A}} \geq \mathbf{p+d+r}$ and $t_{\mathbf{D}} \geq \mathbf{p+d+r}$. For each curve, we plot each player's best response in interval $[\mathbf{p+d+r},7\left(\mathbf{p+d+r}\right)]$. Therefore, each curve starts and ends at different points.
Based on Figure~\ref{fig:BrDp}, for low values of $t_{\mathbf{A}}$, the defender's best response is equal to $t_{\mathbf{A}}+\mathbf{p+d+r}$. The higher the detection time, the less often the defender checks the state of its resource. But, when $t_{\mathbf{A}}$ is large, i.e., the attacker moves slowly, the defender's best response is equal to $\sqrt{2c_\mathbf{k} t_{\mathbf{A}}}$, which is independent from $\mathbf{p}$, $\mathbf{d}$, and $\mathbf{r}$. For the attacker's best response, when $t_{\mathbf{D}}$ is small, the attacker's best response is equal to $t_{\mathbf{D}}+\mathbf{p+d+r}$. For higher values of $t_{\mathbf{D}}$, the attacker's best response is equal to $t_{\mathbf{D}}$ and for even higher values, the attacker's best response is the solution of Equation~\ref{eq:Attcase4}. If $t_{\mathbf{D}}$ is high enough, the attacker's best response is equal to $\mathbf{p+d+r}$. Therefore, the attacker moves slower, i.e., higher values of $t_{\mathbf{A}}$, when $\mathbf{d}$ is higher. 

\begin{figure}
	\centering
	\begin{subfigure} [Defender's best response]{ %
			\includegraphics[scale=0.35]{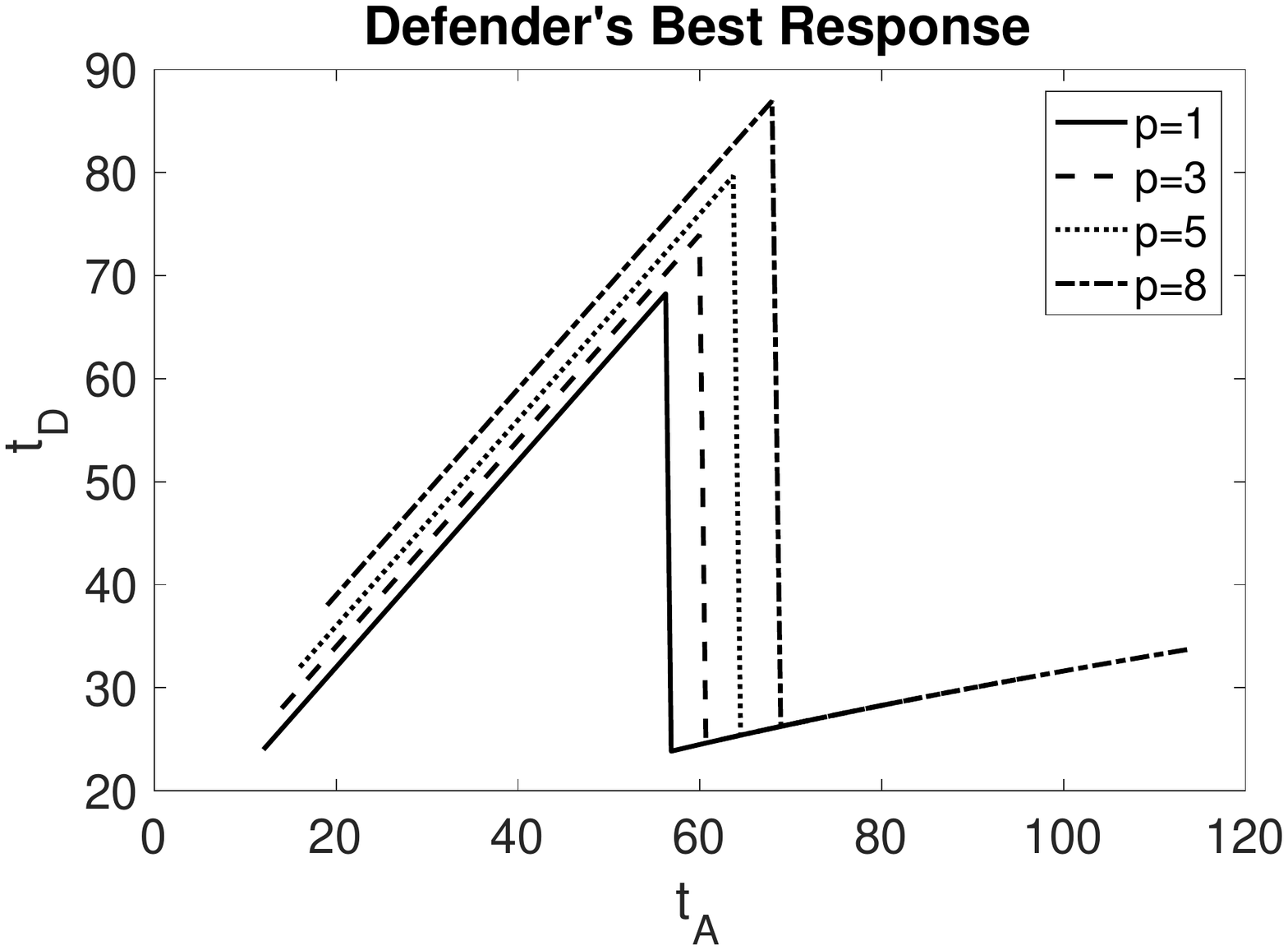}
			\label{fig:BrDd}}
	\end{subfigure}
	\begin{subfigure} [Attacker's best response]{ %
			\includegraphics[scale=0.35]{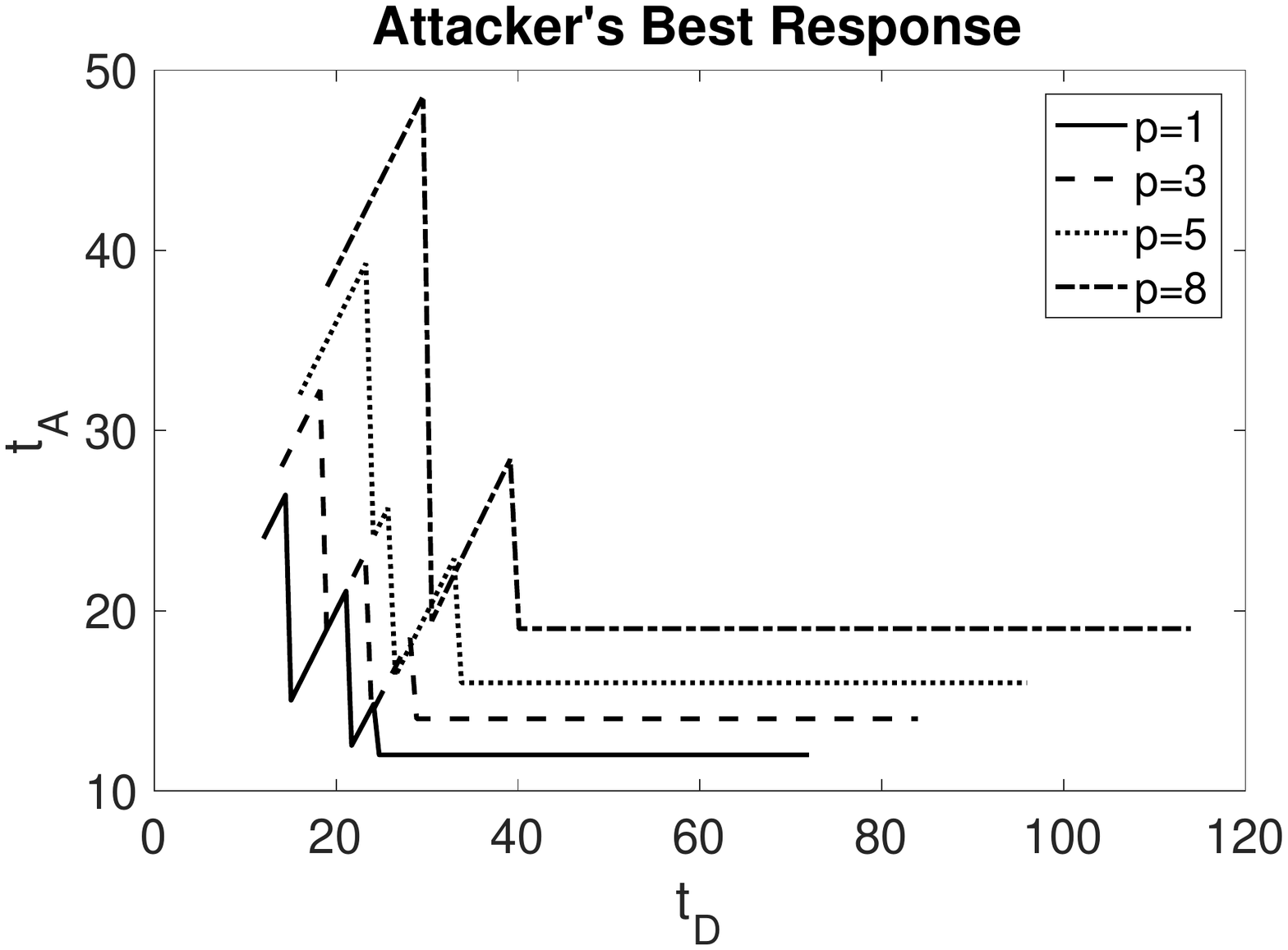} 			
			\label{fig:BrAd}}
	\end{subfigure}
	\caption{Players' best responses for different values of $\mathbf{p}$. We have $c_{\mathbf{D}}=10$, $c_\mathbf{k}=5$, $c_{\mathbf{A}} = 0.5$, $\mathbf{d}=10$, and $\mathbf{r}=1$. }
	\label{fig:BRd}
\end{figure}

Figures~\ref{fig:BrDd} and \ref{fig:BrAd} represent the defender's best response and the attacker's best response for different values of $\mathbf{p}$, respectively. Here, we have $c_{\mathbf{D}}=10$, $c_\mathbf{k}=5$, $c_{\mathbf{A}} = 0.5$, $\mathbf{d}=10$, and $\mathbf{r}=1$.  Similar to the previous figure, when the value of $\mathbf{p}$ increases, attacker and defender do not decrease $t_{\mathbf{A}}$ and $t_{\mathbf{D}}$, respectively. For the attacker, the higher the value of the protection time, the slower is the attacker's move pattern. The defender moves slower for higher values of protection time, if the attacker moves fast enough. Otherwise, the defender is indifferent, since the defender's best response is equal to $\sqrt{2c_\mathbf{k} t_{\mathbf{A}}}$, which is independent from $\mathbf{p}$, $\mathbf{d}$, and $\mathbf{r}$. 

\begin{figure}
	\centering
	\begin{subfigure} [Defender's best response]{ %
			\includegraphics[scale=0.35]{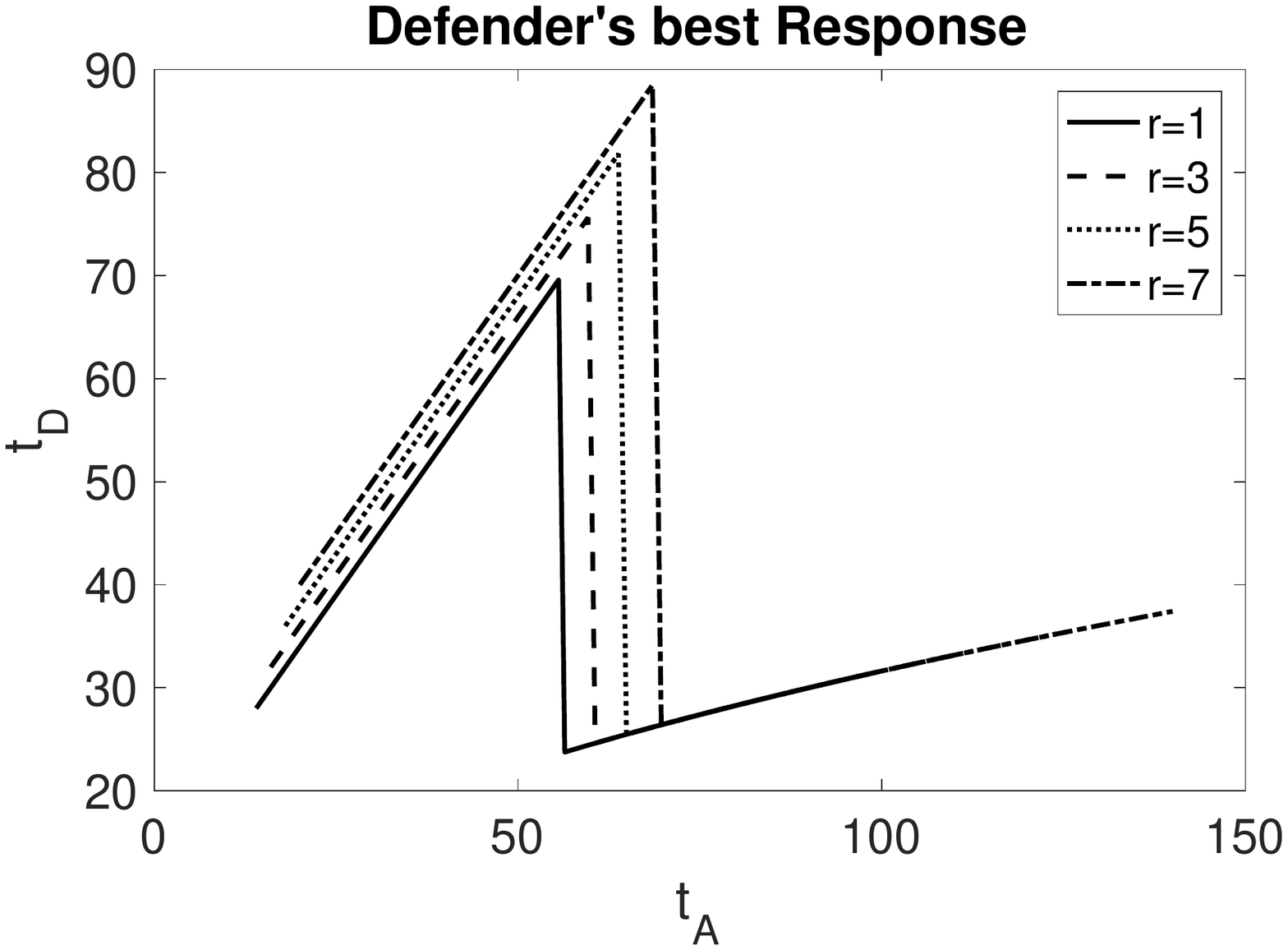}
			\label{fig:BrDr}}
	\end{subfigure}
	\begin{subfigure} [Attacker's best response]{ %
			\includegraphics[scale=0.35]{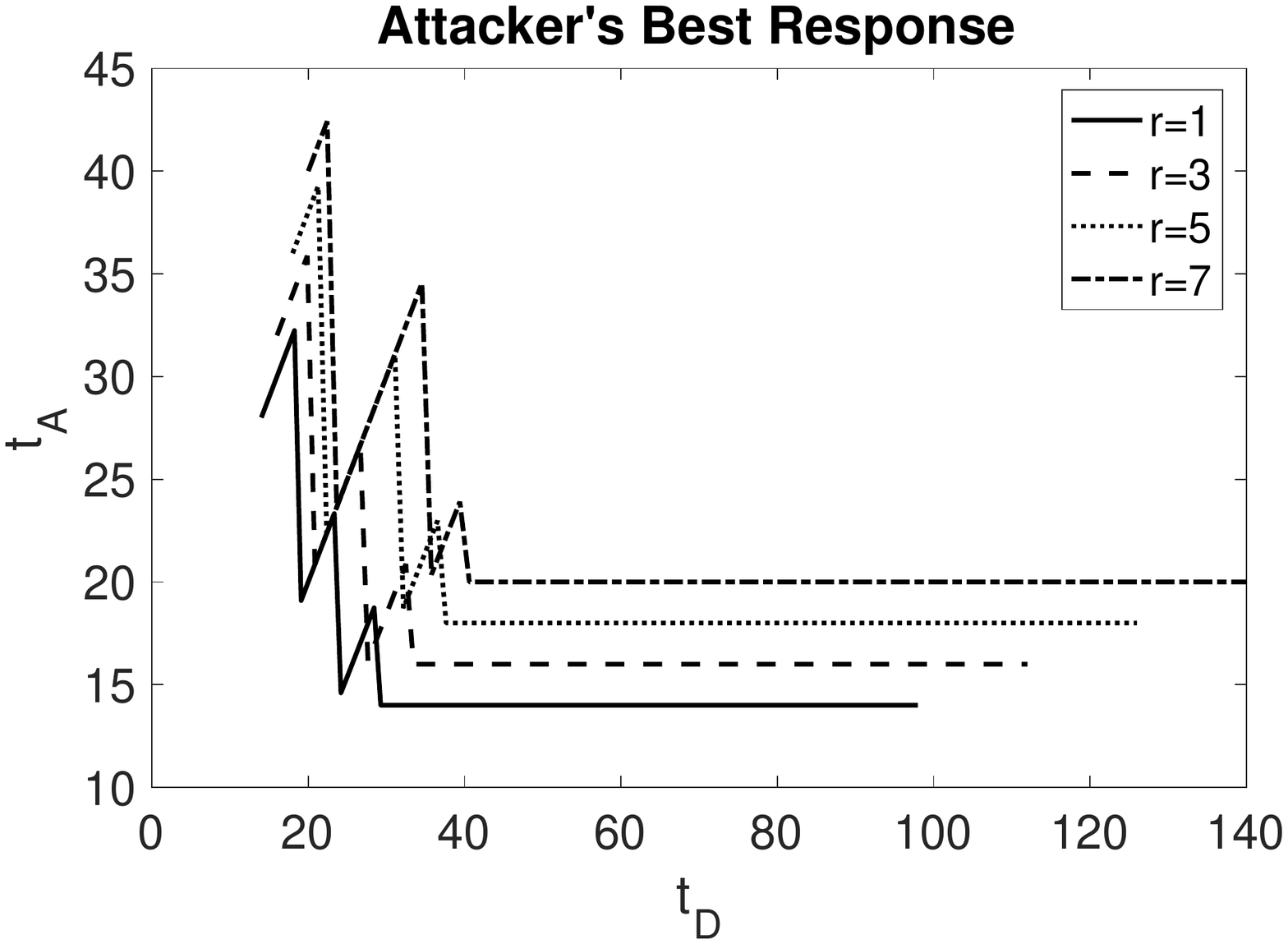} 			
			\label{fig:BrAr}}
	\end{subfigure}
	\caption{Players' best responses for different values of $\mathbf{r}$. We have $c_{\mathbf{D}}=2$, $c_\mathbf{k}=5$, $c_{\mathbf{A}} = 0.5$, $\mathbf{d}=10$, and $\mathbf{p}=3$. }
	\label{fig:BRr}
\end{figure}

Figure~\ref{fig:BRr} represents the effect of the reaction time on both players' best responses, which is similar to the two previous figures considering the effect of protection time and discovery time.  

Figure~\ref{fig:BrDcd} represents the role of $c_\mathbf{k}$ on the defender's best response. As we see in this figure, the defender's best response is equal to $t_{\mathbf{A}} + \mathbf{p+d+r}$ for small values of $t_{\mathbf{A}}$, which is independent from $c_\mathbf{k}$. For higher values of $c_\mathbf{k}$, the defender's best response is equal to $t_{\mathbf{A}}+\mathbf{p+d+r}$ for higher values of $t_{\mathbf{A}}$. Further, for high values of $t_{\mathbf{A}}$, the defender's best response is equal to $\mathbf{p+d+r}$ if $c_\mathbf{k}$ is low enough. Otherwise, the defender's best response is equal to $\sqrt{2c_\mathbf{k} t_{\mathbf{A}}}$.

In Figure~\ref{fig:BrDcD}, we consider the role of $c_{\mathbf{D}}$ on the defender's best response. As we can observe in this figure, the defender's best response switches between $t_{\mathbf{A}}+\mathbf{p+d+r}$ and $\sqrt{2t_{\mathbf{A}}c_\mathbf{k}}$ which are independent from $c_{\mathbf{D}}$. The only outcome depending on $c_{\mathbf{D}}$ can be found at the point where the defender switches from $t_{\mathbf{D}}+\mathbf{p+d+r}$ to $\sqrt{2t_{\mathbf{A}}c_\mathbf{k}}$. According to this figure, the defender switches its best response for higher values of $t_{\mathbf{A}}$ when $c_{\mathbf{D}}$ is higher. 

\begin{figure}
	\centering
	\begin{subfigure} [Defender's best response for different values of $c_\mathbf{k}$. We have $c_{\mathbf{D}}=10$]{ %
			\includegraphics[scale=0.35]{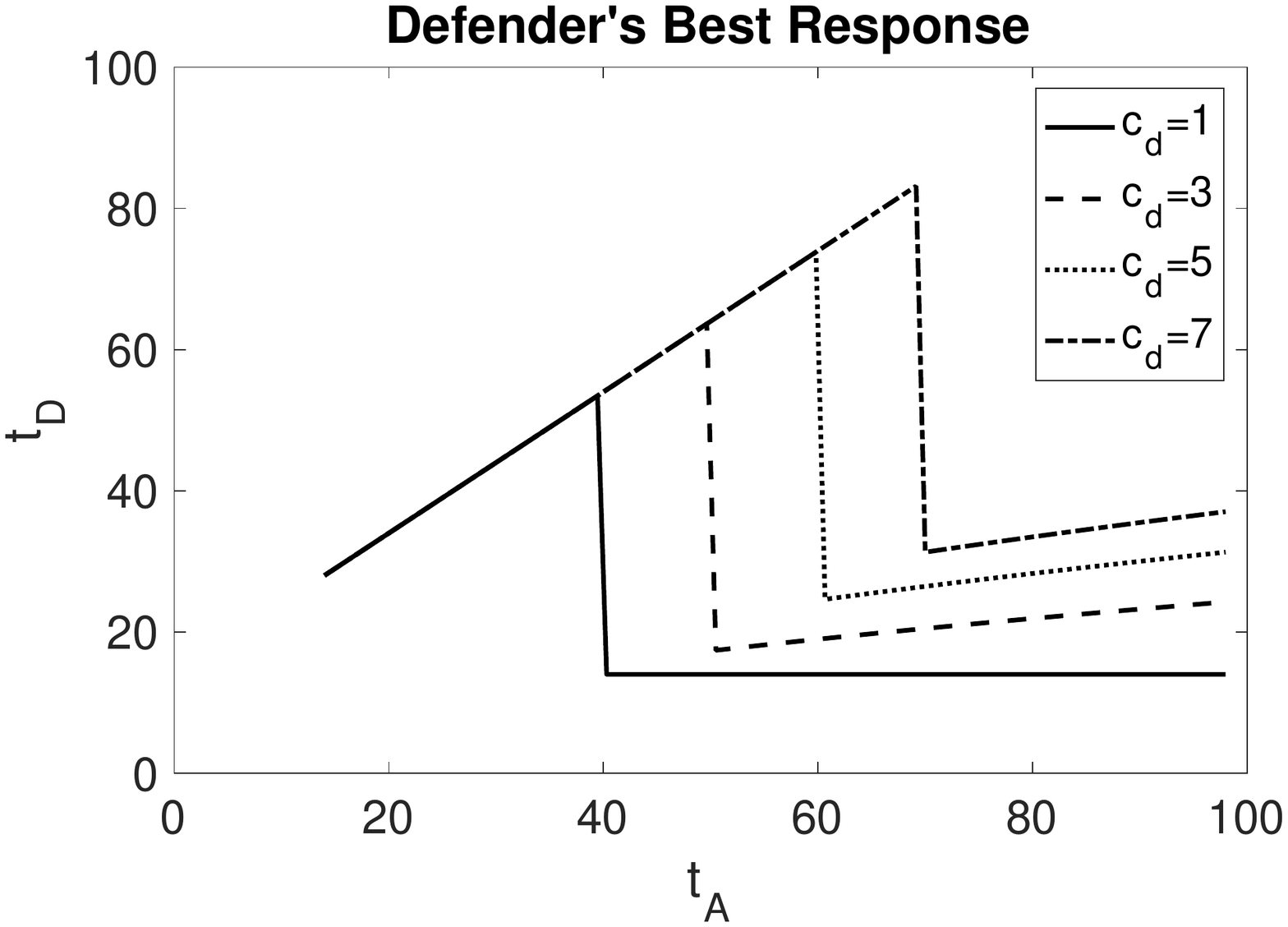}
			\label{fig:BrDcd}}
	\end{subfigure}
	\begin{subfigure} [Defender's best response for different values of $c_{\mathbf{D}}$. We have $c_\mathbf{k}=5$]{ %
			\includegraphics[scale=0.35]{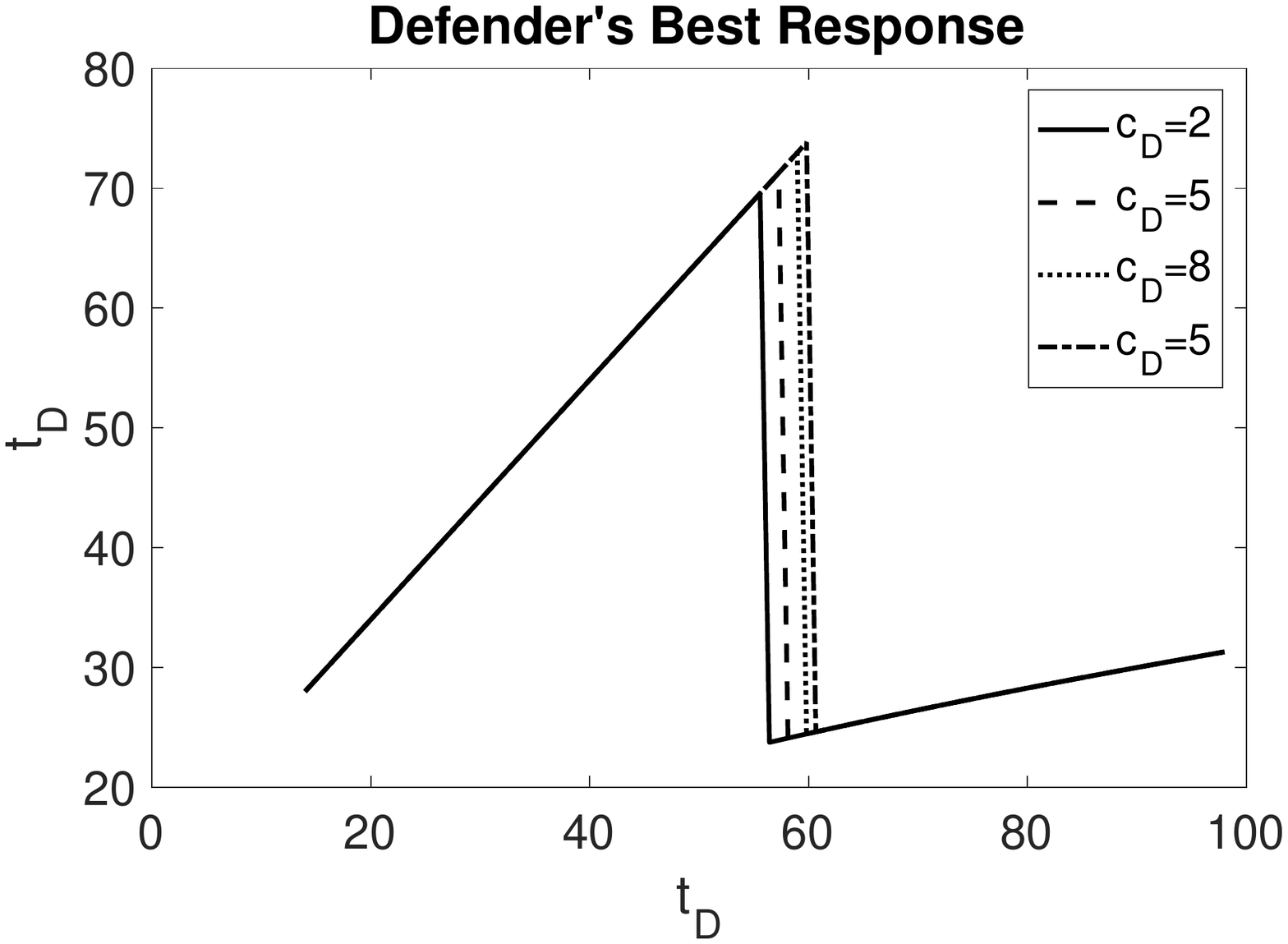} 			
			\label{fig:BrDcD}}
	\end{subfigure}
	\caption{Defender's best responses for different values of and $c_\mathbf{k}$ and $c_{\mathbf{D}}$. We have $c_{\mathbf{A}} = 0.5$, $\mathbf{d}=10$, $\mathbf{r}=1$, and $\mathbf{p}=3$. }
	\label{fig:BRcdcD}
\end{figure}

In Figure~\ref{fig:BRAttcA}, we investigate the role of $c_{\mathbf{A}}$ on the attacker's best response. When the defender moves fast, the attacker's best response is $t_{\mathbf{D}}+\mathbf{p+d+r}$ which is independent from the attacker's cost. For higher values of $t_{\mathbf{D}}$, the attacker's best response depends on the cost. In general, the higher the cost, the slower is the attacker's move.

In order to find any Nash equilibria, we calculate the best response of each player and then find the intersection, which is shown in Figure~\ref{fig:NE}. In this figure, we have $\mathbf{p}=3$, $\mathbf{d}=10$, $\mathbf{r}=1$, $c_\mathbf{k} = 5$, $c_{\mathbf{D}}=10$, and $c_{\mathbf{A}}=0.5$. According to this figure, the intersection of these two curves is at $(t_{\mathbf{A}},t_{\mathbf{D}})=(14.9,28.9)$. Note that the intersection of these two curves is at a discontinuity of the attacker's best response, i.e., the attacker's best response switches from one value to another (from one case to another case). However, the attacker's payoffs for these two best response strategies (at the discontinuity) are \textit{numerically} almost equal to each other, making them strategically equivalent. Note that this is similar to the original FlipIt game with periodic strategies where there exists an interval in which all points within the interval yield the same payoff and those points are part of the best response \cite{bowers2012defending,dijk2013flipit}. In this numerical example, $(t_{\mathbf{A}},t_{\mathbf{D}})=(14.9,28.9)$ is Nash equilibrium. 

\begin{figure}
	\centering
	\begin{subfigure} [Attacker's best response]{ %
			\includegraphics[scale=0.35]{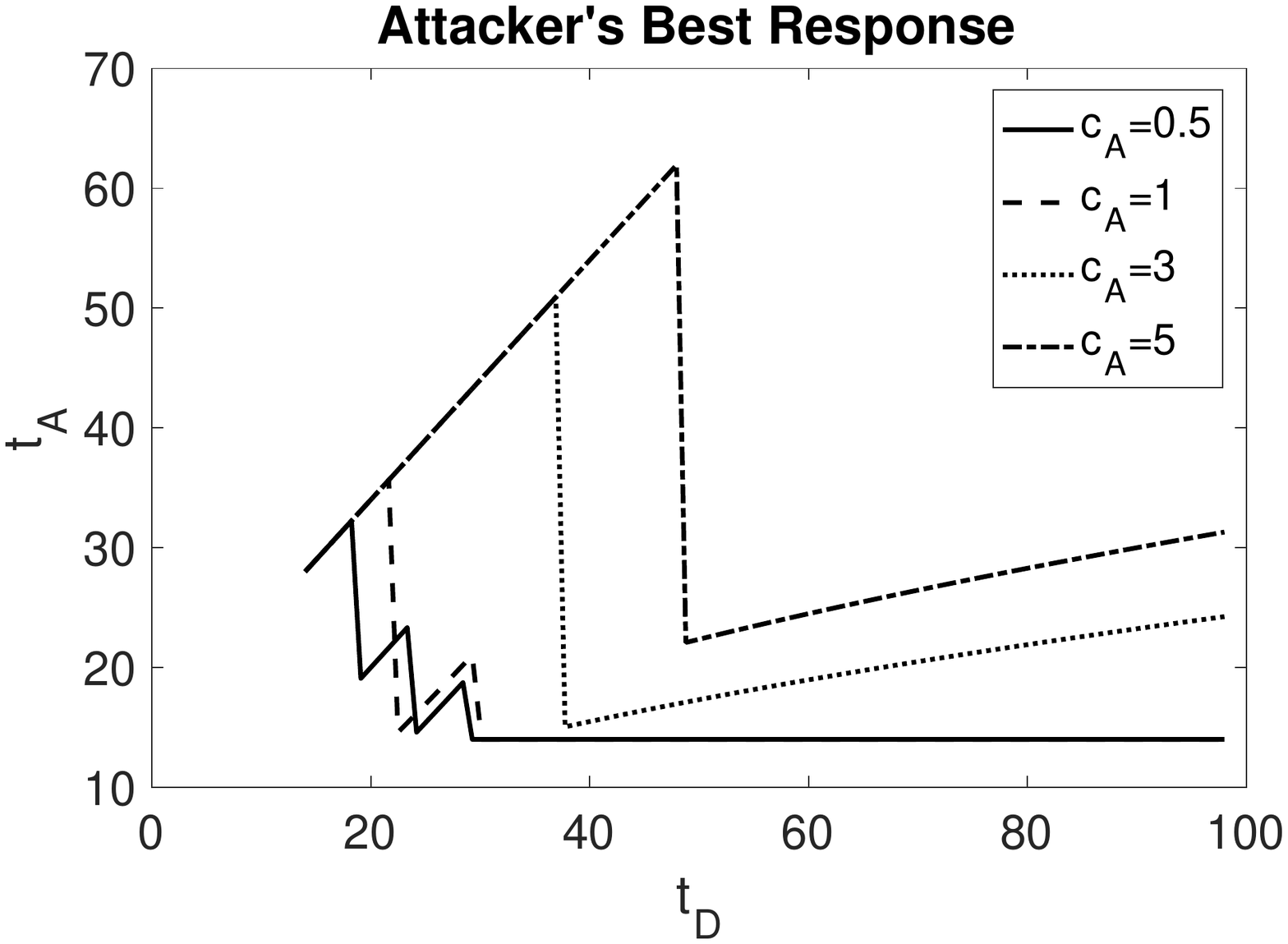}
			\label{fig:BRAttcA}}
	\end{subfigure}
	\begin{subfigure} [Nash Equilibrium]{ %
			\includegraphics[scale=0.35]{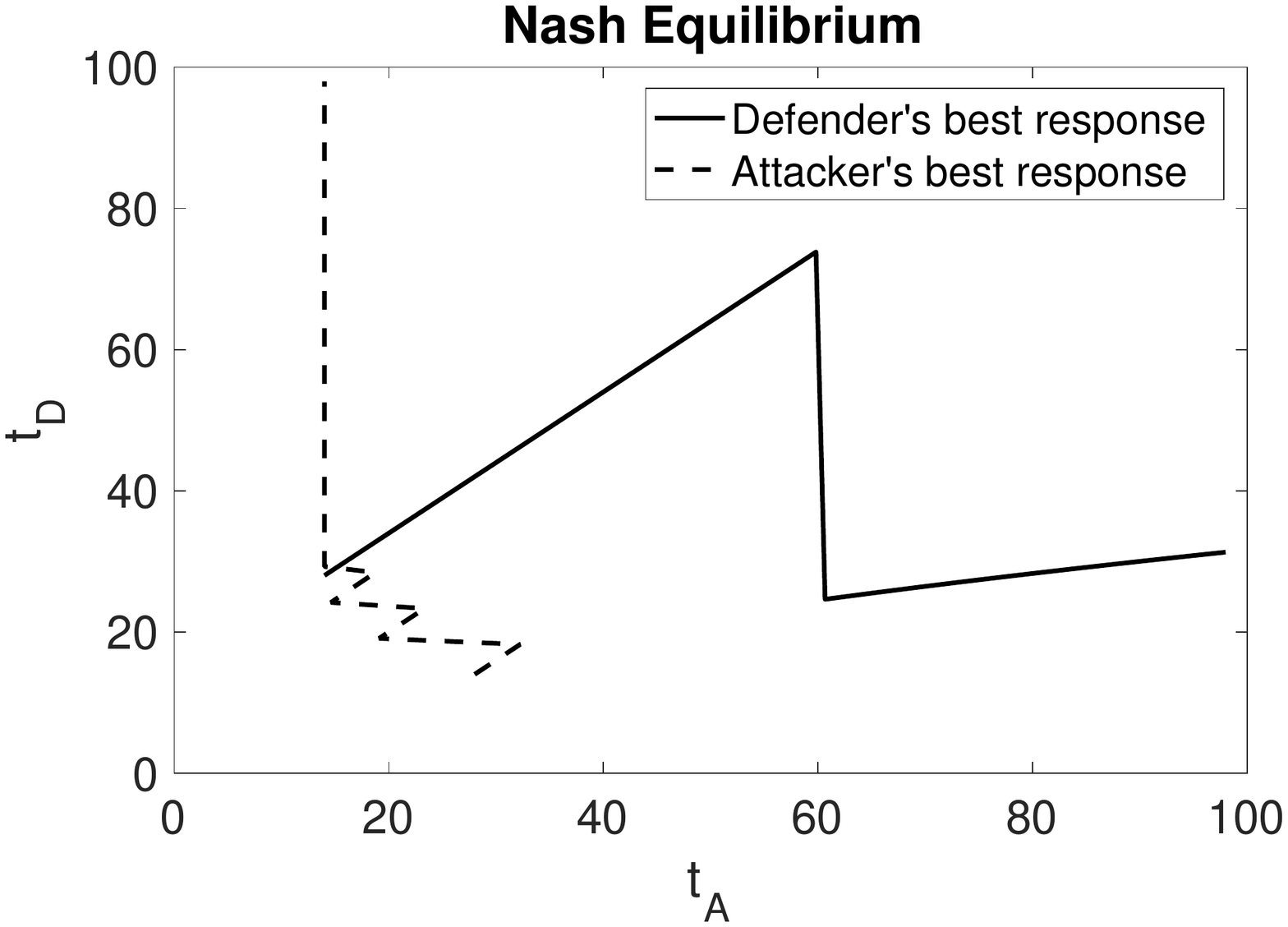} 			
			\label{fig:NE}}
	\end{subfigure}
	\caption{Attacker's best response for different values of $c_\mathbf{A}$ and Nash equilibrium }
	\label{fig:Com}
\end{figure}
\section{Conclusion}
\label{sec:Conclu}
In this paper, we first study the VCDB and screen other data sources to shed light on the question of the actual timing of security incidents and responses. We propose a distribution for the attack discovery time and provide heuristics about the distribution of the protection time and the reaction time in practice. While the gathered insights are useful, we assess the overall state of data collection for timing related data as severely lacking. The terminology for data collection is ambiguous (or at least seems to be interpreted unevenly by data contributors) and the collected data in the VCDB raises several questions. We are unaware of any superior data sources for a broad range of security issues.
 
Second, we propose a game-theoretic framework for Time-Based Security \cite{Schwartau99}. In particular, we aim to provide a richer framework to determine the defender's best time to reset the defense mechanism to a known safe state in the presence of a capable stealthy attacker. We incorporate the notions of protection time, detection time, and reaction time to provide a more realistic environment for the analysis of security scenarios unfolding over time.

Next to the development of the payoff functions for both players, we analytically determine the defender's and the attacker's best responses. We evaluate our game with a numerical approach and do an example calculation for the corresponding Nash equilibrium of the game by visualizing both players' best responses and finding the intersection of these two plots. 

Our analysis is based on several assumptions and therefore provides meaningful opportunities for follow-up research. In particular, we study the case of the defender and the attacker acting periodically. While we observe periodic security behaviors in practice (such as fixed schedules for patch releases, or renewal of passwords and cryptographic keys), the study of other attacker and defender behaviors is equally well motivated. We anticipate the further development of Time-Based Security to positively impact the study of security games of timing, as well as security practice due to the increased relevance of the modeled scenario.


\textbf{Acknowledgments:} We thank the reviewers for their detailed reviews. We further want to thank Aron Laszka for numerous suggestions for improvements based on an earlier version of this manuscript. Sadegh Farhang gratefully acknowledges a travel grant from the National Science Foundation to attend WEIS 2017. The research activities of Jens Grossklags are supported by the German Institute for Trust and Safety on the Internet (DIVSI). 

\bibliographystyle{amsplain}
\bibliography{gs1}










\end{document}